\newtheorem{theorem}{Theorem}[section]
\newtheorem{proposition}[theorem]{Proposition}
\theoremstyle{definition}
\newtheorem{definition}[theorem]{Definition}
\theoremstyle{remark}
\newtheorem*{acknowledgments}{Acknowledgments}
\newcommand{\abs}[1]{\left\lvert #1 \right\rvert}
\newcommand{\norm}[1]{\left\lVert #1 \right\rVert}
\newcommand{\inner}[2]{\left\langle#1,#2\right\rangle}
\newcommand{\cD}{\mathcal{D}}
\newcommand{\cH}{\mathcal{H}}
\newcommand{\cB}{\mathcal{B}}
\newcommand{\cL}{\mathcal{L}}
\newcommand{\R}{{\mathbb R}}
\newcommand{\C}{{\mathbb C}}
\newcommand{\cc}[1]{\overline{#1}}
\newcommand{\ournewclass}{\mathscr{S}(\mathcal{H})}
\renewcommand\hat{\widehat}
\DeclareMathOperator{\im}{im}
\DeclareMathOperator{\dom}{dom}
\DeclareMathOperator{\Ker}{ker}
\DeclareMathOperator{\ran}{ran}
\DeclareMathOperator{\Sp}{spec}
\DeclareMathOperator{\Span}{span}
\DeclareMathOperator{\assoc}{assoc}
\begin{document}
\begin{titlepage}
\title{The spectra of selfadjoint extensions of entire operators
		with deficiency indices (1,1)%
\footnotetext{%
Mathematics Subject Classification(2000):
46E22, 
47A25, 
47B25} 
\footnotetext{%
Keywords: symmetric operators, entire operators, de Branges spaces,
	spectral analysis.}
}
\author{
\textbf{Luis O. Silva}\thanks{Partially supported by CONACYT (M\'exico)
	through grant CB-2008-01-99100}
\\
\small Departamento de M\'{e}todos Matem\'{a}ticos y Num\'{e}ricos
	\\[-1.6mm]
\small Instituto de Investigaciones en Matem\'aticas Aplicadas y
	en Sistemas
	\\[-1.6mm]
\small Universidad Nacional Aut\'onoma de M\'exico
	\\[-1.6mm]
\small C.P. 04510, M\'exico D.F.
	\\[-1.6mm]
\small \texttt{silva@leibniz.iimas.unam.mx}
\\[4mm]
\textbf{Julio H. Toloza}\thanks{Partially supported by CONICET (Argentina)
	through grant PIP 112-200801-01741}
\\
\small CONICET\\[-1.6mm]
\small Centro de Investigaci\'on en Inform\'atica para la Ingenier\'ia
	\\[-1.6mm]
\small Universidad Tecnol\'ogica Nacional -- Facultad Regional C\'ordoba
	\\[-1.6mm]
\small Maestro L\'opez esq.\ Cruz Roja Argentina
	\\[-1.6mm]
\small X5016ZAA C\'{o}rdoba, Argentina
	\\[-1.6mm]
\small \texttt{jtoloza@scdt.frc.utn.edu.ar}}

\date{}
\maketitle
\begin{center}
\begin{minipage}{5in}
\centerline{{\bf Abstract}}
\bigskip
  We give necessary and sufficient conditions for real sequences to be
  the spectra of selfadjoint extensions of an entire operator whose
  domain may be non-dense. For this spectral characterization we use
  de Branges space techniques and a generalization of Krein's
  functional model for simple, regular, closed, symmetric operators
  with deficiency indices (1,1). This is an extension of our previous
  work in which similar results were obtained for densely defined
  operators.
\end{minipage}
\end{center}
\thispagestyle{empty}
\end{titlepage}
\section{Introduction}
The aim of this work is to present a generalization of the spectral
characterization of entire operators given in \cite{II}. This
generalization is realized by extending the notion of entire operators
to a subclass of symmetric operators with deficiency indices $(1,1)$
that may have non-dense domain. The spectral characterization of a
given operator in the class is based on the distribution of the
spectra of its selfadjoint extensions within the Hilbert space. More
concretely, for a given simple, regular, closed symmetric (possibly
not densely defined) operator with deficiency indices $(1,1)$ to be
entire it is necessary and sufficient that the spectra of two of its
selfadjoint extensions satisfy conditions which reduce to the
convergence of certain series (the precise statement is
Proposition~\ref{prop:spectrum-tells-if-operator-is-entire}).

The class of entire operators was concocted by M. G. Krein as a tool
for treating in a unified way several classical problems in analysis
\cite{krein1,krein2,krein3,krein4}. The entire operators form a
subclass of the closed, densely defined, symmetric, regular operators
with equal deficiency indices. They have many remarkable properties as
is accounted for in the review book \cite{gorbachuk}. Krein's
definition of entire operators hinges on his functional model for
symmetric operators and it requires the existence of an element of the
Hilbert space with very peculiar properties. As first discussed in
\cite{II} it is possible to determine whether an operator is
entire by conditions that rely exclusively  on the distribution of
the spectra of selfadjoint extensions of the operator.

Although Krein's original work considers only densely defined
symmetric operators, it is clear that the definition of entire
operators can be extended to the case of not necessarily dense domain
with no formal changes (see Definition~\ref{def:entire-operators}).
Since non-densely and densely defined symmetric operators share
certain properties, the machinery developed in \cite{II} carries over
with some mild modifications.

One ingredient of our discussion is an extension of the functional
model developed in \cite{II}. This functional model associates a de
Branges space to every simple, regular, closed symmetric operator with
deficiency indices (1,1). It is worth remarking that functional models
for this and for related classes of operators have been implemented
before; see for instance \cite{debranges2,strauss}. However, the
functional model proposed in \cite{II} has shown to be particularly
suitable for us. Here we deem appropriate to mention \cite{martin} for
a related kind of results.

This paper is organized as follows. In Section 2 we recall some of the
properties held by operators that are closed, simple, symmetric
with deficiency indices $(1,1)$; the notion of entire operator is also
introduced here. Section 3 provides a short review on the theory of de
Branges Hilbert spaces, including those results relevant to this work,
in particular, a slightly modified version of a theorem due to Woracek
(Proposition~\ref{prop:1-in-dB-boosted}). In Section 4 we introduce a
functional model for any operator of the class under consideration so
that the model space is always a de Branges space. Finally, in Section 5 we
single out the class of de Branges spaces corresponding to entire
operators and  provide  necessary and sufficient
conditions on the spectra of two selfadjoint extensions of an entire
operator.
\begin{acknowledgments}
  Part of this work was done while the second author (J.\ H.\ T.)
  visited IIMAS--UNAM in January 2011. He sincerely thanks them for
  their kind hospitality.
\end{acknowledgments}

\section{On symmetric operators with not necessarily dense domain}
Let $\cH$ be a separable Hilbert space whose inner product
$\inner{\cdot}{\cdot}$ is assumed antilinear in its
first argument.
In this space we consider a closed, symmetric operator $A$ with
deficiency indices $(1,1)$. It is not assumed that its domain is dense
in $\cH$, therefore one should deal with the case when the adjoint of
$A$ is a linear relation. That is, in general,
\begin{equation}
\label{eq:adjoint-def}
A^* := \left\{\{\eta,\omega\}\in\cH\oplus\cH :
		\inner{\eta}{A\varphi}=\inner{\omega}{\varphi}
		\text{ for all }\varphi\in\dom(A)\right\}.
\end{equation}
Whenever the orthogonal complement of $\dom(A)$ is trivial, the set
$A^*(0):=\{\omega\in\cH:\{0,\omega\}\in A^*\}$ is also trivial,
i.\,e. $A^*(0)=\{0\}$, so $A^*$ is an operator; otherwise $A^*$ is a
proper closed linear relation.

For $z\in\C$ one has
\begin{equation}
\label{eq:adjoint-shifted}
A^*-zI := \left\{\{\eta,\omega-z\eta\}\in\cH\oplus\cH :
		\{\eta,\omega\}\in A^*\right\}
\end{equation}
so accordingly
\begin{equation}
\label{eq:ker-adjoint}
\Ker(A^*-zI) := \left\{\eta\in\cH : \{\eta,0\}\in A^*-zI \right\}.
\end{equation}
Since $\Ker(A^*-zI)=\cH\ominus\ran(A-\cc{z}I)$, our assumption on the
deficiency indices implies $\dim\Ker(A^*-zI)=1$ for all $z\in\C\setminus\R$.
Also, since
\begin{equation*}
A^*(0) = \left\{\omega\in\cH : \inner{\omega}{\psi} = 0
		\text{ for all }\psi\in\dom(A)\right\},
\end{equation*}
it is obvious that $A^*(0) = \dom(A)^\perp$.

The selfadjoint extensions within $\cH$ of a closed, non-densely
defined symmetric operator $A$ are the selfadjoint linear relations
that extend the graph of $A$. We recall that a linear relation $B$ is
selfadjoint if $B=B^*$ (as subsets of $\cH\oplus\cH$).

The following assertion follows easily from \cite[Section 1,
Lemma~2.2 and Theorem~2.4]{hassi1}.

\begin{proposition}
\label{prop:misc-about-symm-operators}
Let $A$ be a closed, non-densely defined, symmetric operator in $\cH$ with
deficiency indices $(1,1)$. Then:
\begin{enumerate}[{(i)}]
\item The codimension of $\dom(A)$ equals one.
\item All except one of the selfadjoint extensions of $A$ within $\cH$
	are operators.
\item Let $A_\gamma$ be one of the selfadjoint extensions of $A$ within $\cH$.
	Then the operator
	\begin{equation*}
	I + (z-w)(A_\gamma-zI)^{-1},\quad
        z\in\C\setminus\Sp(A_\gamma),\quad
        w\in\C
	\end{equation*}
	maps $\Ker(A^*-wI)$ injectively onto $\Ker(A^*-zI)$.
\end{enumerate}
\end{proposition}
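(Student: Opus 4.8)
The plan is to read off (i) and (ii) from the cited results of Hassi et al., and to prove (iii) by the resolvent computation familiar from the densely defined case, keeping in mind that here $A^*$ --- and, for one choice of $\gamma$, also $A_\gamma$ --- is a linear relation rather than an operator.

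For (i) and (ii): recall from the discussion above that $\dom(A)^\perp=A^*(0)$, so (i) is precisely the statement that the multivalued part of $A^*$ is one--dimensional. This is the non--densely--defined analogue of ``deficiency index one'' and is contained in \cite{hassi1}: the passage from $A$ to any of its selfadjoint extensions within $\cH$ costs exactly one dimension (the common deficiency index), so every such extension has at most one--dimensional multivalued part, and there is a unique selfadjoint extension $A_\infty$ whose multivalued part exhausts $A^*(0)$; since $\dom(A)$ is not dense, $A^*(0)\neq\{0\}$, whence $\dim A^*(0)=1$. This gives (i), and (ii) follows because every selfadjoint extension different from $A_\infty$ has trivial multivalued part and is hence an operator.

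For (iii), write $U(z,w):=I+(z-w)(A_\gamma-zI)^{-1}$; for $z\notin\Sp(A_\gamma)$ this is a bounded operator on $\cH$, including when $A_\gamma=A_\infty$, since $(A_\gamma-zI)^{-1}$ is still the bounded, everywhere--defined resolvent of the relation $A_\gamma$. The first step is the inclusion $U(z,w)\,\Ker(A^*-wI)\subseteq\Ker(A^*-zI)$. Fix $g\in\Ker(A^*-wI)$, put $f:=(A_\gamma-zI)^{-1}g\in\dom(A_\gamma)$ and $h:=U(z,w)g=g+(z-w)f$. By the identity $\Ker(A^*-zI)=\cH\ominus\ran(A-\cc zI)$ it suffices to show $\inner{h}{(A-\cc zI)\varphi}=0$ for every $\varphi\in\dom(A)$. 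Expanding $(A-\cc zI)\varphi=(A-\cc wI)\varphi+(\cc w-\cc z)\varphi$, the part of $\inner{h}{(A-\cc zI)\varphi}$ coming from $g$ reduces, via $g\perp\ran(A-\cc wI)$, to $(\cc w-\cc z)\inner{g}{\varphi}$; the part coming from $f$ is handled by noting that $\{\varphi,(A-\cc zI)\varphi\}\in A_\gamma-\cc zI$ (because $\{\varphi,A\varphi\}\in A\subseteq A_\gamma$) while $\{f,g\}\in A_\gamma-zI=(A_\gamma-\cc zI)^*$, so the adjoint--relation pairing yields $\inner{f}{(A-\cc zI)\varphi}=\inner{g}{\varphi}$ and this part equals $(\cc z-\cc w)\inner{g}{\varphi}$. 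The two parts cancel, so $h\in\Ker(A^*-zI)$.

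The second step upgrades this to a bijection. For $z$ and $w$ both outside $\Sp(A_\gamma)$ --- which in particular includes all nonreal $w$ --- the resolvent identity gives $U(w,z)U(z,w)=U(z,w)U(w,z)=I$ on $\cH$, so $U(z,w)$ is boundedly invertible with inverse $U(w,z)$; applying the first step with the roles of $z$ and $w$ exchanged yields $U(w,z)\,\Ker(A^*-zI)\subseteq\Ker(A^*-wI)$, and the two inclusions together force $U(z,w)$ to restrict to a bijection of $\Ker(A^*-wI)$ onto $\Ker(A^*-zI)$, injectivity being automatic from invertibility on all of $\cH$. I expect the only genuine difficulty to be bookkeeping: carrying the relation $A^*$ correctly through the first step and checking that the adjoint--relation pairing used there and the resolvent identity remain valid when $A_\gamma$ is the relation $A_\infty$ rather than an operator. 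Everything else is the classical Krein identity in a slightly broader setting; alternatively, (iii) can be extracted directly from the parametrization of selfadjoint extensions and their resolvents in \cite{hassi1}.
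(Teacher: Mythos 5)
The paper gives no argument at all for this proposition: it is dispatched with the single remark that the assertions ``follow easily from'' Lemma~2.2 and Theorem~2.4 of \cite{hassi1}. Your proposal is therefore necessarily a different route. For (i) and (ii) you rely on the same reference, with a correct sketch (selfadjoint extensions are one-dimensional graph extensions of the operator $A$, hence have at most one-dimensional multivalued part; the exceptional extension's multivalued part is $\dom(A)^\perp=A^*(0)$, which together with non-density of $\dom(A)$ forces $\dim A^*(0)=1$, and every other selfadjoint extension is an operator). For (iii) you give a genuine self-contained proof, and it is sound: the pairing $\inner{f}{(A-\cc{z}I)\varphi}=\inner{g}{\varphi}$ is legitimate because $\{f,g\}\in A_\gamma-zI=(A_\gamma-\cc{z}I)^*$ while $A-\cc{z}I\subseteq A_\gamma-\cc{z}I$, the antilinear first slot is handled correctly (the coefficient $\cc{z}-\cc{w}$), and the resolvent identity, valid for the selfadjoint relation $A_\gamma$, does give $U(w,z)=U(z,w)^{-1}$, so the two inclusions yield the bijection. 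What this buys over the paper's bare citation is an argument that visibly works verbatim when $A_\gamma$ is the non-operator extension, which is the only new feature relative to \cite{II}.

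The one discrepancy is scope, and you should not leave it as ``bookkeeping.'' The statement asserts the bijection for every $w\in\C$, whereas your second step only covers $w\notin\Sp(A_\gamma)$ (your first step, the inclusion, does hold for all $w$). In fact you cannot do better: if $w\in\Sp(A_\gamma)$ is an eigenvalue, then any $\phi\in\Ker(A_\gamma-wI)\subseteq\Ker(A^*-wI)$ satisfies $(A_\gamma-zI)^{-1}\phi=(w-z)^{-1}\phi$, hence $\left[I+(z-w)(A_\gamma-zI)^{-1}\right]\phi=0$, so injectivity on $\Ker(A^*-wI)$ fails; for regular $A$ the two kernels coincide and are one-dimensional, so the image is even $\{0\}$. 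Thus the restriction in your proof is forced, and it is the statement itself that should read $w\in\C\setminus\Sp(A_\gamma)$ --- which is all that is used in \eqref{eq:mapping-in-kernel}, where $w_0\in\C\setminus\R$, though the paper later invokes the map at a point $v\in\Sp(A_\gamma)$ in \eqref{eq:xi-def}, precisely where this caveat bites.
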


In connection with this proposition we remind the reader that the
spectrum of a closed linear relation $B$ is the complement of the set
of all $z\in\C$ such that $(B-zI)^{-1}$ is a bounded operator defined
on all $\cH$.  Moreover, $\Sp(B)\subset\R$ when $B$ is a selfadjoint
linear relation \cite{dijksma}.

Given $\psi_{w_0}\in\Ker(A^*-w_0I)$, with $w_0\in\C\setminus\R$, let us define
\begin{equation}\label{eq:mapping-in-kernel}
\psi(z):= \left[I + (z-w_0)(A_\gamma-zI)^{-1}\right]\psi_{w_0},
\end{equation}
Note that $I + (z-w_0)(A_\gamma-zI)^{-1}$ is the generalized Cayley transform.
Obviously, $\psi(w_0)=\psi_{w_0}$. Moreover, a computation involving the
resolvent identity yields
\begin{equation}\label{eq:identity-between-kernels}
\psi(z) = \left[I + (z-v)(A_\gamma-zI)^{-1}\right]\psi(v),
\end{equation}
for any pair $z,v\in\C\setminus\R$. This identity will be used later on.

Let us now recall some concepts that will be used to single out
a class of closed symmetric operators with deficiency indices $(1,1)$.

A closed, symmetric operator $A$ is called {\em simple} if
\begin{equation*}
\bigcap_{z\in\C\setminus\R}\ran(A-zI) = \{0\}.
\end{equation*}
Equivalently, $A$ is simple if there exists no non-trivial subspace
$\cL\subset\cH$ that reduces $A$ and whose restriction to $\cL$ yields a
selfadjoint operator \cite[Proposition~1.1]{langer-textorius}.

There is one property specific to simple, closed symmetric operators with
deficiency indices $(1,1)$, that is of interest to us. It concerns their
commutativity with involutions. We say that an involution $J$ commutes with
a selfadjoint relation $B$ if
\[
J(B-zI)^{-1}\varphi = (B-\cc{z}I)^{-1}J\varphi,
\]
for every $\varphi\in\cH$ and $z\in\C\setminus\R$. If $B$ is moreover an
operator this is equivalent to the usual notion of commutativity, that is,
\[
J\dom(B)\subset\dom(B),\qquad JB\varphi = BJ\varphi
\]
for every $\varphi\in\dom(B)$.

\begin{proposition}
  Let $A$ be a simple, closed symmetric operator with deficiency
  indices $(1,1)$. Then there exists an involution $J$ that commutes
  with all its selfadjoint extensions within $\cH$.
\end{proposition}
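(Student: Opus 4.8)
The plan is to construct $J$ explicitly as a conjugation (an isometric antilinear involution) on $\cH$ and then to verify the two required commutation identities. First I would fix $w_0\in\C\setminus\R$, a nonzero $\psi_{w_0}\in\Ker(A^*-w_0I)$, and a selfadjoint extension $A_\gamma$ of $A$ within $\cH$ that happens to be an operator (one exists by Proposition~\ref{prop:misc-about-symm-operators}(ii)); being selfadjoint, $A_\gamma$ is then a genuine, densely defined selfadjoint operator regardless of whether $\dom(A)$ is dense, and this is precisely what makes the argument uniform in both cases. Let $\psi(z)$, $z\in\C\setminus\R$, be as in \eqref{eq:mapping-in-kernel}, so that \eqref{eq:identity-between-kernels} holds and, by Proposition~\ref{prop:misc-about-symm-operators}(iii), each $\psi(z)$ spans $\Ker(A^*-zI)=\cH\ominus\ran(A-\cc zI)$; simplicity of $A$ then gives $\cc{\Span\{\psi(z):z\in\C\setminus\R\}}=\cH$. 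On this dense linear span I would define $J$ by sending $\sum_j c_j\psi(z_j)$ to $\sum_j\cc{c_j}\,\psi(\cc{z_j})$ and then extend it by continuity; antilinearity and $J^2=I$ are then automatic, so everything hinges on showing that this assignment is isometric, which simultaneously makes it well defined.

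The key identity is
\begin{equation*}
 \inner{\psi(z)}{\psi(v)}=\cc{\inner{\psi(\cc z)}{\psi(\cc v)}},\qquad z,v\in\C\setminus\R,
\end{equation*}
since a short computation shows that the isometry of the assignment above on the span is equivalent to this identity. To prove it I would pass to the spectral representation of $A_\gamma$: rewriting \eqref{eq:mapping-in-kernel} as $\psi(z)=(A_\gamma-w_0I)(A_\gamma-zI)^{-1}\psi_{w_0}$ and letting $\mu$ denote the scalar spectral measure of $\psi_{w_0}$ with respect to $A_\gamma$, the functional calculus yields
\begin{equation*}
 \inner{\psi(z)}{\psi(v)}=\int_{\R}\frac{(t-\cc{w_0})(t-w_0)}{(t-\cc z)(t-v)}\,d\mu(t)=\int_{\R}\frac{\abs{t-w_0}^2}{(t-\cc z)(t-v)}\,d\mu(t),
\end{equation*}
with bounded integrand. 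Since $\abs{t-w_0}^2$ and $\mu$ are real, replacing $z,v$ by $\cc z,\cc v$ and conjugating reproduces the same integral, which is the asserted identity. (The underlying reason is that the generalized Cayley transform connecting the deficiency spaces at $z$ and $\cc z$ is unitary, so the trivialization $\psi(\cdot)$ is automatically compatible with conjugation; I expect this to be the crux of the argument, since with a careless normalization the naive construction would break down exactly here.) Granting the identity, $J$ extends to a conjugation on $\cH$ with $J\psi(z)=\psi(\cc z)$ for every $z\in\C\setminus\R$.

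It remains to check the commutation relations. That $J$ commutes with $A_\gamma$ would follow because $J(A_\gamma-zI)^{-1}$ and $(A_\gamma-\cc zI)^{-1}J$ are bounded maps agreeing on the dense set of vectors $\psi(v)$, $v\neq z$: by \eqref{eq:identity-between-kernels} one has $(A_\gamma-zI)^{-1}\psi(v)=(z-v)^{-1}(\psi(z)-\psi(v))$, and applying $J$ and then \eqref{eq:identity-between-kernels} at $\cc z,\cc v$ produces $(A_\gamma-\cc zI)^{-1}\psi(\cc v)$. For any other selfadjoint extension $A'$ of $A$ within $\cH$, Krein's resolvent formula reads $(A'-zI)^{-1}\varphi=(A_\gamma-zI)^{-1}\varphi+\tau(z)\,\inner{\psi(\cc z)}{\varphi}\,\psi(z)$ with a scalar function $\tau$ satisfying $\tau(\cc z)=\cc{\tau(z)}$; combining this with $J\psi(z)=\psi(\cc z)$, with the identity $\inner{\psi(z)}{J\varphi}=\inner{\varphi}{\psi(\cc z)}$ (immediate from $J$ being a conjugation), and with the commutation of $J$ and $A_\gamma$ just obtained, a direct calculation gives $J(A'-zI)^{-1}=(A'-\cc zI)^{-1}J$; this also covers the single selfadjoint extension that is a linear relation. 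This last part I expect to be routine resolvent bookkeeping; the real obstacle is the key identity of the middle paragraph.
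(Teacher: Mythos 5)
Your proposal is correct and follows essentially the same route as the paper: define $J$ on the family $\psi(z)$ by $J\psi(z)=\psi(\cc z)$, use the symmetry identity for the inner products to get a well-defined conjugation on the span (dense by simplicity), extend by continuity, obtain commutation with $A_\gamma$ from the resolvent identity $(A_\gamma-zI)^{-1}\psi(v)=(\psi(z)-\psi(v))/(z-v)$, and pass to all other selfadjoint extensions via Krein's resolvent formula. The only divergence is local: you verify the key identity $\inner{\psi(\cc z)}{\psi(\cc v)}=\inner{\psi(v)}{\psi(z)}$ by the spectral theorem for an operator extension (which exists by Proposition~\ref{prop:misc-about-symm-operators}), whereas the paper gets it from the unitarity of the generalized Cayley transform together with the resolvent identity; both verifications are sound, and your treatment of Krein's formula is simply a more explicit version of the paper's appeal to it.
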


\begin{proof}
  Choose a selfadjoint extension $A_\gamma$ and consider $\psi(z)$ as
  defined by \eqref{eq:mapping-in-kernel}. Recalling
  \eqref{eq:identity-between-kernels} along with the unitary character
  of the generalized Cayley transform, and applying the resolvent
  identity, one can verify that
\begin{equation}
\label{eq:pre-j}
\inner{\psi(\cc{z})}{\psi(\cc{v})} = \inner{\psi(v)}{\psi(z)}
\end{equation}
for every pair $z,v\in\C\setminus\R$.

Now define the action of $J$ on the set $\{\psi(z):z\in\C\setminus\R\}$ by
the rule
\[
J\psi(z)=\psi(\cc{z}),
\]
and on the set $\cD$ of finite linear combinations of such elements as
\[
J\left(\sum_{n}c_n\psi(z_n)\right): = \sum_{n}\cc{c_n}\psi(\cc{z_n}).
\]
Then, on one hand, \eqref{eq:pre-j} implies that $J$ is an involution on
$\cD$ which can be extended to all $\cH$ because of the simplicity of $A$.
On the other hand, since by the resolvent identity
\[
(A_\gamma-wI)^{-1}\psi(z) = \frac{\psi(z)-\psi(w)}{z-w},
\]
one obtains the identity
\[
J(A_\gamma-wI)^{-1}\psi(z)=(A_\gamma-\cc{w}I)^{-1}J\psi(z)
\]
which by linearity holds on $\cD$ and in turn it extends to all
$\cH$.

So far we know that $J$ commutes with $A_\gamma$. By resorting to the
well-known resolvent formula due to Krein (see \cite[Theorem
3.2]{hassi1} for a generalized formulation), one immediately obtains
the commutativity of $J$ with all the selfadjoint extensions of $A$
within $\cH$.
\end{proof}

A closed, symmetric operator is called {\em regular} if for every $z\in\C$
there exists $d_z>0$ such that
\begin{equation}
\label{eq:regular-point}
\norm{(A-zI)\psi}\ge d_z\norm{\psi},
\end{equation}
for all $\psi\in\dom(A)$. In other words, $A$ is regular if every point of
the complex plane is a point of regular type.

\begin{definition}
Let $\ournewclass$ be the class of simple, regular, closed symmetric
operators in $\cH$, whose deficiency indices are $(1,1)$.
\end{definition}

In \cite{I,II} we deal with the subclass of operators in
$\ournewclass$ that are densely defined. In the present work we extend
the results of \cite{II} to the larger class defined above.  At this
point it is convenient to touch upon some well-known properties shared
by the operators in $\ournewclass$ that are densely defined, and whose
generalizations to the whole class is rather straightforward. The
following statement is one of such generalizations which we believe
may have been already proven, however, due to the lack of the proper
reference, we provide the proof below.

\begin{proposition}
\label{prop:properties-of-new-class}
For $A\in\ournewclass$ the following assertions hold true:
\begin{enumerate}[{(i)}]
\item The spectrum of every selfadjoint extension of $A$ within $\cH$
	consists solely of isolated eigenvalues of multiplicity one.
\item Every real number is part of the spectrum of one, and only one,
	selfadjoint extension of $A$ within $\cH$.
\item The spectra of the selfadjoint extensions of $A$ within $\cH$ are
	pairwise interlaced.
\end{enumerate}
\end{proposition}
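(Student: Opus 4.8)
The plan is to convert each assertion into a statement about zeros of entire functions, with regularity doing the essential work; all three are classical for the densely defined members of $\ournewclass$ (see \cite{gorbachuk} and \cite{II}), and the arguments carry over once the operator $A^*$ is replaced throughout by the linear relation $A^*$ and von Neumann's and Krein's formulas by their linear-relation counterparts from \cite{hassi1}. The key device is the following consequence of regularity: since each $A-zI$ is bounded below on $\dom(A)$, the range $\ran(A-zI)$ is closed for every $z\in\C$, and its codimension is constant on the connected set of points of regular type, hence equals one for all $z\in\C$, so that $\dim\Ker(A^*-zI)=1$ for all $z\in\C$, not merely for $z\in\C\setminus\R$. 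Moreover the deficiency subspaces $\Ker(A^*-zI)$ then constitute a holomorphic line subbundle of $\C\times\cH$, which is trivial because $\C$ is contractible. I therefore fix once and for all an entire, nowhere-vanishing section $z\mapsto\xi(z)$ of this bundle, so that $\{\xi(z),z\xi(z)\}\in A^*$ for all $z\in\C$, while $\xi(\lambda)\notin\dom(A)$ for $\lambda\in\R$ (else $\lambda$ would be an eigenvalue of $A$, contradicting \eqref{eq:regular-point}).

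To prove (i), observe first that $A_\gamma$ exceeds $A$ by a one-dimensional space in $\cH\oplus\cH$ (\cite{hassi1}), so two linearly independent eigenvectors of $A_\gamma$ for a common $\lambda$ would yield a nonzero $v\in\dom(A)$ with $(A-\lambda I)v=0$, which regularity forbids; hence the eigenvalues are simple. Next, $A-\lambda I$ is closed and bounded below on $\dom(A)$, so $(A-\lambda I)|_{\dom(A)}$ has closed range, and since $\dom(A_\gamma)$ exceeds $\dom(A)$ by a space of dimension at most one, $\ran(A_\gamma-\lambda I)$ is closed too; hence every point of $\Sp(A_\gamma)$ is an eigenvalue, with eigenvectors in $\Ker(A^*-\lambda I)=\C\xi(\lambda)$. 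Thus, for real $\lambda$, $\lambda\in\Sp(A_\gamma)$ exactly when $\{\xi(\lambda),\lambda\xi(\lambda)\}\in A_\gamma$. Since $A_\gamma$ is a closed subspace of $A^*$ of codimension one (this is how the selfadjoint extensions within $\cH$ are parametrized, cf.\ \cite{hassi1}, the relation extension included), it equals the set of $p\in A^*$ with $\ell_\gamma(p)=0$ for a suitable continuous linear functional $\ell_\gamma$ on $\cH\oplus\cH$, and the condition becomes $e_\gamma(\lambda)=0$, where $e_\gamma(z):=\ell_\gamma(\{\xi(z),z\xi(z)\})$ is entire because $\ell_\gamma$ is continuous and linear and $\xi$ is entire. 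Moreover $e_\gamma\not\equiv 0$: otherwise every real number would be an eigenvalue of $A_\gamma$ with eigenvector $\xi(\lambda)$, forcing an uncountable orthogonal family into the separable space $\cH$. Therefore $\Sp(A_\gamma)=\{\lambda\in\R:e_\gamma(\lambda)=0\}$ is discrete, which is (i).

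For (ii), fix $\lambda\in\R$. The element $\{\xi(\lambda),\lambda\xi(\lambda)\}\in A^*$ is nonzero modulo $A$ (because $\xi(\lambda)\notin\dom(A)$) and isotropic for the boundary form, since $\inner{\lambda\xi(\lambda)}{\xi(\lambda)}-\inner{\xi(\lambda)}{\lambda\xi(\lambda)}=(\cc\lambda-\lambda)\norm{\xi(\lambda)}^{2}=0$; in the two-dimensional space $A^*/A$, equipped with the nondegenerate boundary form, it thus spans a unique line, which (being one-dimensional) is automatically isotropic and hence corresponds to exactly one selfadjoint extension of $A$ within $\cH$. That extension, and no other, contains $\{\xi(\lambda),\lambda\xi(\lambda)\}$ --- equivalently, has $\lambda$ in its spectrum --- which proves (ii); together with (i) it shows that $\R$ is partitioned by the spectra of the selfadjoint extensions of $A$.

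For (iii), fix a selfadjoint extension $A_0$. By (i) its spectrum is discrete, so the Krein $Q$-function attached to the pair $(A,A_0)$ by means of the family $\psi(\cdot)$ of \eqref{eq:mapping-in-kernel} is a \emph{meromorphic} Nevanlinna function whose poles are precisely the points of $\Sp(A_0)$ --- none removable, since simplicity of $A$ makes $\psi_{w_0}$ a cyclic vector of $A_0$. Krein's resolvent formula identifies the spectrum of any other selfadjoint extension with a level set $\{z:Q(z)=c\}$, $c\in\R\cup\{\infty\}$ (with $c=\infty$ giving back $\Sp(A_0)$). A non-constant meromorphic Nevanlinna function has strictly negative residues at its real poles and strictly positive derivative between them, hence increases strictly from $-\infty$ to $+\infty$ on each interval between consecutive poles and so takes every real value there exactly once; therefore $\{z:Q(z)=c\}$ strictly interlaces $\Sp(A_0)$, and by symmetry the spectra of any two selfadjoint extensions interlace. --- The step I expect to be the main obstacle is the discreteness in (i): it depends entirely on the existence of the entire section $\xi$, that is, on the fact --- valid precisely because $A$ is regular, and false in general --- that the deficiency subspaces prolong to a holomorphic line bundle over all of $\C$. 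A secondary, more technical difficulty is to make the classification of the selfadjoint extensions and the boundary-form computations precise in the non-densely defined setting, where $A^*$ is a proper linear relation and, unlike the densely defined case, the operator selfadjoint extensions of $A$ all share the single domain $\dom(A^*)$.
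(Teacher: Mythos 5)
Your reduction of (i) to the zero set of the entire function $e_\gamma(z)=\ell_\gamma(\{\xi(z),z\xi(z)\})$ is attractive, but it rests entirely on the existence of an entire, nowhere-vanishing selection $z\mapsto\xi(z)\in\Ker(A^*-zI)$, and that is exactly the point you do not prove. Constancy of the defect number on the field of regularity gives $\dim\Ker(A^*-zI)=1$ for all $z\in\C$, but it does \emph{not} give a holomorphic subbundle structure: holomorphy of the kernel family is automatic only off the real axis (via the resolvent of any selfadjoint extension, as in \eqref{eq:mapping-in-kernel}), and the whole difficulty is to continue it analytically across a real point $r$. The only way this is done in the paper --- and, as far as the standard literature goes, the only known mechanism --- is to produce a selfadjoint extension whose spectrum avoids a neighbourhood of $r$; the paper gets this from regularity by extending the bounded Hermitian operator $(A-rI)^{-1}$, defined on the (possibly non-dense) subspace $\ran(A-rI)$, to a bounded selfadjoint $B$ with $\norm{B}\le d_r^{-1}$ via Krein's theorem \cite{krein-half-bounded}, so that $B^{-1}$ (a relation, in general) is a selfadjoint extension of $A-rI$ with the lacuna $(-d_r,d_r)$ free of spectrum. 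That lacuna argument is precisely what legitimizes a local holomorphic frame near $r$; asserting ``the deficiency subspaces then constitute a holomorphic line subbundle'' presupposes it. Note also that within this paper the entire section $\xi(z)$ is only constructed in Section 4, formula \eqref{eq:xi-def}, and that construction uses a real entire $h_\gamma$ whose zero set is $\Sp(A_\gamma)$ --- i.e.\ it uses Proposition~\ref{prop:properties-of-new-class}(i). So invoking $\xi$ to prove (i) is circular unless you supply an independent proof of local analyticity at real points, which is the heart of the matter. (The Grauert/contractibility step for triviality of the line bundle is acceptable once the bundle structure exists, though it is far heavier machinery than needed.)

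Once (i) is granted, your treatments of (ii) and (iii) are sound and close in spirit to the paper's: the paper derives both from the generalized Aronszajn--Krein formula \cite[Equation\,3.17]{hassi1} and the monotonicity of meromorphic Herglotz functions, which is essentially your $Q$-function argument for (iii); your (ii), done algebraically through the boundary form on the two-dimensional quotient $A^*/A$ (with the extension theory for relations from \cite{hassi1}), is a legitimate alternative to the paper's analytic route and in fact does not need discreteness, only $\dim\Ker(A^*-\lambda I)=1$ at real $\lambda$ and the fact that spectral points of the extensions are eigenvalues. The paper instead finishes (i) by a compactness argument: around every real $r$ there is a lacuna $(r-d_r,r+d_r)$ in which any other selfadjoint extension (the non-operator one included, via the Aronszajn--Krein formula) has at most one simple eigenvalue, and a closed interval is covered by finitely many lacunae. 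Your finish via the discrete zero set of $e_\gamma$ would also work --- but only after the missing analyticity lemma, which is where the real work of the paper's proof lies.
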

\begin{proof}
  Let us proof (i) in a way similar to the one used to prove
  \cite[Propositions 3.1 and 3.2]{gorbachuk}, but taking into account
  that the operator is not necessarily densely defined.

  For $A\in\ournewclass$ and any $r\in\R$ consider the constant $d_r$
  of (\ref{eq:regular-point}). Thus, the symmetric operator
  $(A-rI)^{-1}$, defined on the subspace $\ran(A-rI)$, is such that
  $\norm{(A-rI)^{-1}}\le d_r^{-1}$. By \cite[Theorem
  2]{krein-half-bounded} there is a selfadjoint extension $B$ of
  $(A-rI)^{-1}$ defined on the whole space and such that $\norm{B}\le
  d_r^{-1}$. Now, $B^{-1}$ is a selfadjoint extension of $A-rI$ and
  $\norm{B^{-1}f}\ge d_r\norm{f}$ for any $f\in\dom(B^{-1})$, which
  implies that the interval $(-d_r,d_r)\cap\Sp(B^{-1})=\emptyset$. By
  shifting $B^{-1}$ one obtains a selfadjoint extension of $A$ with no
  spectrum in the spectral lacuna $(r-d_r,r+d_r)$. By perturbation
  theory any selfadjoint extension of $A$ which is an operator has no
  points of the spectrum in this spectral lacuna other than one
  eigenvalue of multiplicity one. When $\cc{\dom(A)}\ne\cH$, the same
  is also true for the spectrum of the selfadjoint extension which is
  not an operator. This follows from a generalization of the
  Aronzajn-Krein formula (see \cite[Equation\,3.17]{hassi1}) after
  noting that the Weyl function is Herglotz and meromorphic for any
  selfadjoint extension being an operator. Now, for proving (i)
  consider any closed interval of $\R$, cover it with spectral lacunae
  and take a finite subcover.

  Once (i) has been proven, the assertions (ii) and (iii) follow from
  \cite[Equation\,3.17]{hassi1} and the properties of Herglotz
  meromorphic functions.
\end{proof}

\begin{definition}
\label{def:entire-operators}
  An operator $A\in\ournewclass$ is called {\em entire} if there
  exists $\mu\in\cH$ such that
\[
\cH = \ran(A-zI)\dot{+}\Span\{\mu\}
\]
for all $z\in\C$. Such $\mu$ is called an {\em entire gauge}.
\end{definition}

If $A\in\ournewclass$ turns out to be densely defined, then
Definition~\ref{def:entire-operators} reduces to Krein's
\cite[Section 1]{krein3}.  There are various densely defined operators
known to be entire \cite[Chapter 3]{gorbachuk},
\cite[Section 4]{krein3}. On the other hand, for what will be explained
in the subsequent sections, there are also entire operators with non-dense
domain.  Let us outline how one may construct an entire operator which
is not densely defined. The details of this construction will be
expounded in a further paper.

Consider the semi-infinite Jacobi matrix
\begin{equation}
  \label{eq:jm}
  \begin{pmatrix}
  q_1 & b_1 & 0  &  0  &  \cdots \\[1mm]
  b_1 & q_2 & b_2 & 0 & \cdots \\[1mm]
  0  &  b_2  & q_3  & b_3 &  \\
  0 & 0 & b_3 & q_4 & \ddots\\
  \vdots & \vdots &  & \ddots & \ddots
  \end{pmatrix}\,,
\end{equation}
where $b_k>0$ and $q_k\in\mathbb{R}$ for $k\in\mathbb{N}$. Fix an
orthonormal basis $\{\delta_k\}_{k\in\mathbb{N}}$ in $\cH$. Let $B$ be
the operator in $\cH$ whose matrix representation with respect to
$\{\delta_k\}_{k\in\mathbb{N}}$ is (\ref{eq:jm})
(cf. \cite[Section 47]{MR1255973}).  We assume that $B\ne B^*$,
equivalently, that $B$ has deficiency indices $(1,1)$ \cite[Chapter 4,
Section 1.2]{MR0184042}. Let $B_0$ be the restriction of $B$ to the set
$\{\phi\in\dom(B):\inner{\phi}{\delta_1}=0\}$. It follows from
(\ref{eq:adjoint-def}), (\ref{eq:adjoint-shifted}) and
(\ref{eq:ker-adjoint}) that $\eta\in\ker(B_0^*-zI)$ if and only if it
satisfies the equation
\begin{equation*}
  \inner{B\phi}{\eta}=\inner{\phi}{z\eta}\qquad\forall\phi\in\dom(B_0)\,.
\end{equation*}
Thus $\ker(B_0^*-zI)$ is the set of $\eta$'s in $\cH$ that satisfy
\begin{equation}
  \label{eq:difference}
  b_{k-1}\inner{\delta_{k-1}}{\eta}+q_k\inner{\delta_{k}}{\eta} +
  b_k\inner{\delta_{k+1}}{\eta}=z\inner{\delta_{k}}{\eta}\quad\forall k>1
\end{equation}
Hence $\dim\ker(B_0^*-zI)\le 2$. Now, let
\begin{equation*}
  \pi(z):=\sum_{k=1}^\infty P_{k-1}(z)\delta_k\qquad
  \theta(z):=\sum_{k=1}^\infty Q_{k-1}(z)\delta_k\,,
\end{equation*}
where $P_k(z)$, respectively $Q_k(z)$, is the $k$-th polynomial of
first, respectively second, kind associated to (\ref{eq:jm}). By the
definition of the polynomials $P_k(z)$ and $Q_k(z)$ \cite[Chapter 1,
Section 2.1]{MR0184042}, $\pi(z)$ and $\theta(z)$ are linearly
independent solutions of (\ref{eq:difference}) for every fixed
$z\in\C$. Moreover, since $B\ne B^*$, $\pi(z)$ and $\theta(z)$ are in
$\cH$ for all $z\in\C$ \cite[Theorems 1.3.1, 1.3.2]{MR0184042},
\cite[Theorem 3]{MR1627806}. So one arrives at the conclusion that,
for every fixed $z\in\C$,
\begin{equation*}
  \ker(B_0^*-zI)=\Span\{\pi(z),\theta(z)\}\,.
\end{equation*}
Any symmetric non-selfadjoint extension of $B_0$ has deficiency indices
(1,1). Furthermore, if $\kappa(z)$ is a ($z$-dependent) linear combination of 
$\pi(z)$ and $\theta(z)$ such that
$\inner{\kappa(z)}{\theta(z)}=0$ for all $z\in\C\setminus\R$, then (by a
parametrized version of \cite[Theorem 2.4]{MR1627806}) there
corresponds to an appropriately chosen isometry from
$\Span\{\kappa(z)\}$ onto $\Span\{\kappa(\cc{z})\}$ a non-selfadjoint
symmetric extension $\widetilde{B}$ of $B_0$ such that
$\dom(\widetilde{B})$ is not dense and
$\ker(\widetilde{B}^*-zI)=\Span\{\theta(z)\}$.  We claim that $\widetilde{B}$
is a non-densely defined entire operator. Indeed,
$\widetilde{B}\in\ournewclass$ (the simplicity follows from the
properties of the associated polynomials \cite[Chapter 1, Addenda and
Problems 7]{MR0184042}). Moreover, since
\begin{equation*}
  \inner{\theta(z)}{\delta_2}=b_1^{-1}\,,\qquad\forall z\in\C\,,
\end{equation*}
$\delta_2$ is an entire gauge.




\section{A review on de Branges spaces with zero-free functions}
\label{subsec:dB}
Let $\cB$ denote a nontrivial Hilbert space of entire functions with
inner product $\inner{\cdot}{\cdot}_\cB$. $\cB$ is a de Branges space
when, for every function $f(z)$ in $\cB$, the following conditions holds:
\begin{enumerate}[({A}1)]
\item For every $w\in\C\setminus\R$, the linear functional
        $f(\cdot)\mapsto f(w)$  is continuous;

\item for every non-real zero $w$ of $f(z)$, the function
        $f(z)(z-\cc{w})(z-w)^{-1}$ belongs to $\cB$
        and has the same norm as $f(z)$;

\item the function $f^\#(z):=\cc{f(\cc{z})}$ also belongs to $\cB$
        and has the same norm as $f(z)$.
\end{enumerate}

It follows from (A1) that for every non-real $w$ there is a function
$k(z,w)$ in $\cB$ such that $\inner{k(\cdot,w)}{f(\cdot)}_{\cB} =
f(w)$ for all $f(z)\in\cB$. Moreover,
$k(w,w)=\inner{k(\cdot,w)}{k(\cdot,w)}_\cB\ge 0$ where, as a
consequence of (A2), the positivity is strict for every non-real $w$
unless $\cB$ is $\C$; see the proof of Theorem~23 in
\cite{debranges}. Note that
$k(z,w)=\inner{k(\cdot,z)}{k(\cdot,w)}_\cB$ whenever $z$ and $w$ are
both non-real, therefore $k(w,z)=\cc{k(z,w)}$. Furthermore, due to
(A3) it can be shown that $\cc{k(\cc{z},w)}=k(z,\cc{w})$ for every
non-real $w$; we refer again to the proof of Theorem~23 in
\cite{debranges}.  Also note that $k(z,w)$ is entire with respect to
its first argument and, by (A3), it is anti-entire with respect to the
second one (once $k(z,w)$, as a function of its second argument, has
been extended to the whole complex plane \cite[Problem
52]{debranges}).

There is another way of defining a de Branges space. One starts by
considering an entire function $e(z)$ of the Hermite-Biehler class,
that is, an entire function without zeros in the upper half-plane
$\C^+$ that satisfies the inequality
$\abs{e(z)}>\abs{e^\#(z)}$ for $z\in\C^+$. Then, the de
Branges space $\cB(e)$ associated to $e(z)$
is the linear manifold of all entire functions $f(z)$ such that both
$f(z)/e(z)$ and $f^\#(z)/e(z)$ belong to the Hardy space
$H^2(\C^+)$, and equipped with the inner product
\[
\inner{f(\cdot)}{g(\cdot)}_{\cB(e)}:=
\int_{-\infty}^\infty\frac{\cc{f(x)}g(x)}{\abs{e(x)}^2}dx.
\]
It turns out that $\cB(e)$ is complete.

Both definitions of de Branges spaces are equivalent, viz., every
space $\cB(e)$ obeys (A1--A3); conversely, given a space $\cB$ there
exists an Hermite-Biehler function $e(z)$ such that $\cB$ coincides
with $\cB(e)$ as sets and the respective norms satisfy the equality
$\norm{f(\cdot)}_{\cB}=\norm{f(\cdot)}_{\cB(e)}$ \cite[Chapter
2]{debranges}. The function $e(z)$ is not unique; a choice for it is
\begin{equation*}
e(z)=-i\sqrt{\frac{\pi}{k(w_0,w_0)\im(w_0)}}
\left(z-\cc{w_0}\right)k(z,w_0),
\end{equation*}
where $w_0$ is some fixed complex number in  $\C^+$.

An entire function $g(z)$ is said to be associated to a de Branges
space $\cB$ if for every $f(z)\in\cB$ and
$w\in\C$,
\begin{equation*}
\frac{g(z)f(w)-g(w)f(z)}{z-w}\in\mathcal{B}.
\end{equation*}
The set of associated functions is denoted $\assoc\mathcal{B}$.  It
is well known that
\begin{equation*}
\assoc\mathcal{B} = \mathcal{B} + z\mathcal{B};
\end{equation*}
see \cite[Theorem 25]{debranges} and \cite[Lemma 4.5]{kaltenback} for
alternative characterizations. In passing, let us note that
$e(z)\in\assoc\mathcal{B}(e)\setminus\mathcal{B}(e)$; this fact
follows easily from \cite[Theorem 25]{debranges}.

The space $\assoc\mathcal{B}(e)$ contains a distinctive family of
entire functions. They are given by
\begin{equation*}
s_\beta(z):=\frac{i}{2}\left[e^{i\beta}e(z)-e^{-i\beta}e^\#(z)\right],
	\quad \beta\in[0,\pi).
\end{equation*}
These real entire functions are related to the selfadjoint extensions
of the multiplication operator $S$ defined
by \begin{equation}\label{eq:multiplication-operator}
  \dom(S):=\{f(z)\in\mathcal{B}: zf(z)\in\mathcal{B}\},\quad
  (Sf)(z)=zf(z).
\end{equation}
This is a simple, regular, closed symmetric operator with deficiency
indices $(1,1)$ which is not necessarily densely defined
\cite[Proposition 4.2, Corollary 4.3, Corollary 4.7]{kaltenback}.  It
turns out that $\cc{\dom(S)}\neq\mathcal{B}$ if and only if there
exists $\gamma\in[0,\pi)$ such that
$s_\gamma(z)\in\mathcal{B}$. Furthermore,
$\dom(S)^\perp=\Span\{s_\gamma(z)\}$ \cite[Theorem~29]{debranges} and
\cite[Corollary~6.3]{kaltenback}; compare with (i) of
Proposition~\ref{prop:misc-about-symm-operators}.

For any selfadjoint extension $S_\sharp$ of $S$ there exists a
unique $\beta$ in $[0,\pi)$ such that
\begin{equation}
\label{eq:selfadjoint-extensions-s}
(S_\sharp-wI)^{-1}f(z)
	= \frac{f(z)-\frac{s_\beta(z)}{s_\beta(w)}f(w)}{z-w},\quad
		w\in\C\setminus\Sp(S_\sharp),\quad f(z)\in\cB.
\end{equation}
Moreover, $\Sp(S_\sharp)=\left\{x\in\mathbb{R}: s_\beta(x)=0\right\}$.
\cite[Propositions~4.6 and 6.1]{kaltenback}. If $S_\sharp$ is a selfadjoint
operator extension of $S$, then \eqref{eq:selfadjoint-extensions-s} is
equivalent to
\begin{gather*}
\dom(S_\sharp) =
	\left\{g(z)=\frac{f(z)-
\frac{s_\beta(z)}{s_\beta(z_0)}f(z_0)}{z-z_0},
	\quad f(z)\in\mathcal{B},\quad z_0:s_\beta(z_0)\neq 0\right\},
\\[2mm]
(S_\sharp g)(z) = z g(z) +
\frac{s_\beta(z)}{s_\beta(z_0)}f(z_0).\nonumber
\end{gather*}
The eigenfunction $g_x$ corresponding to $x\in\Sp(S_\sharp)$ is given
(up to normalization) by
\[
g_x(z)=\frac{s_\beta(z)}{z-x}.
\]
Thus, since $S$ is regular and simple, every $s_\beta(z)$ has only real
zeros of multiplicity one and the (sets of) zeros of any pair $s_\beta(z)$ and
$s_{\beta'}(z)$ are always interlaced.

The proof of the following result can be found in \cite{woracek2} for
a particular pair of selfadjoint extensions of $S$. Another proof, when the 
operator $S$ is densely defined, is given in \cite[Proposition 3.9]{II}.

\begin{proposition}\label{prop:1-in-dB-boosted}
  Suppose $e(x)\neq 0$ for $x\in\mathbb{R}$ and
  $e(0)=(\sin\gamma)^{-1}$ for some fixed $\gamma\in(0,\pi)$. Let
  $\{x_n\}_{n\in\mathbb{N}}$ be the sequence of zeros of the function
  $s_\gamma(z)$. Also, let $\{x_n^+\}_{n\in\mathbb{N}}$ and
  $\{x_n^-\}_{n\in\mathbb{N}}$ be the sequences of positive,
  respectively negative, zeros of $s_\gamma(z)$, arranged according
  to increasing modulus.  Then a zero-free, real entire function
  belongs to $\mathcal{B}(e)$ if and only if the following
  conditions hold true:
\begin{enumerate}[(C1)]
\item The limit
	$\displaystyle{\lim_{r\to\infty}\sum_{0<|x_n|\le r}
		\frac{1}{x_n}}$
	exists;
\item $\displaystyle{\lim_{n\to\infty}\frac{n}{x_n^{+}}
		=- \lim_{n\to\infty}\frac{n}{x_n^{-}}<\infty}$;
\item Assuming that $\{b_n\}_{n\in\mathbb{N}}$ are the zeros of
  $s_\beta(z)$, define
	\[
	h_\beta(z):=\left\{\begin{array}{ll}
			\displaystyle{\lim_{r\to\infty}\prod_{|b_n|\le r}
			\left(1-\frac{z}{b_n}\right)}
				& \text{ if 0 is not a root of } s_\beta(z),
			\\
			\displaystyle{z\lim_{r\to\infty}\prod_{0<|b_n|\le r}
			\left(1-\frac{z}{b_n}\right)}
				& \text{ otherwise. }
			   \end{array}\right.
	\]
	The series
	$\displaystyle{
		\sum_{n\in\mathbb{N}}\abs{\frac{1}
		{h_{0}(x_n)h_{\gamma}'(x_n)}}}$ is convergent.
\end{enumerate}
\end{proposition}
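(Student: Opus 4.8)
The plan is to reduce the statement to a known factorization criterion for membership in a de Branges space and then translate that criterion into the three explicit conditions (C1)--(C3). First I would recall that a zero-free real entire function $g(z)$ belongs to $\mathcal{B}(e)$ precisely when $g(z)/e(z)$ and $g^\#(z)/e(z)$ lie in $H^2(\C^+)$; since $g$ is real one has $g^\#=g$, so the single requirement is $g/e\in H^2(\C^+)$ together with membership of $g$ in the linear manifold, which (because $g$ is real and zero-free) can be analyzed through the boundary behaviour of $|g(x)|^2/|e(x)|^2$ on $\R$. The natural device is to write $e(z)$ in terms of the real entire functions $s_0(z)$ and $s_\gamma(z)$: from the definition $s_\beta(z)=\frac{i}{2}[e^{i\beta}e(z)-e^{-i\beta}e^\#(z)]$ one solves for $e(z)$ and $e^\#(z)$ as linear combinations of $s_0(z)$ and $s_\gamma(z)$, obtaining $|e(x)|^2 = c\, s_0(x)^2 + c'\,s_\gamma(x)^2 + \dots$ on the real axis, with constants fixed by $\gamma$ and by the normalization $e(0)=(\sin\gamma)^{-1}$. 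This exhibits $|e(x)|$ in terms of the two interlacing zero sets $\{x_n\}$ (zeros of $s_\gamma$) and $\{b_n\}$ (zeros of $s_0$).

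Next I would bring in the Hadamard-type product representations. Because $S$ is regular and simple, each $s_\beta(z)$ has only simple real zeros and, by condition (C1) and (C2)---which are exactly the hypotheses ensuring that the zeros have convergence exponent $\le 1$ with a well-defined symmetric sum---the canonical products $h_0(z)$ and $h_\gamma(z)$ of genus $\le 1$ in (C3) are well defined and coincide with $s_0(z)$ and $s_\gamma(z)$ up to multiplication by $e^{az}$ for real $a$ (here one uses that these are real entire functions of exponential type, a consequence of $\cB(e)$ being a de Branges space with $e$ zero-free on $\R$). In particular $|e(x)|$ is comparable, up to the exponential factors which have modulus controlled on $\R$, to $(h_0(x)^2 + h_\gamma(x)^2)^{1/2}$. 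Then $g/e\in H^2(\C^+)$ becomes, via the classical criterion for a real entire zero-free function of exponential type to lie in the de Branges space (cf.\ the sampling/interpolation description of $\cB(e)$ at the zeros of $s_\gamma$), the condition that $g$ be reconstructible from its values $g(x_n)$ with $\sum_n |g(x_n)|^2/(|e(x_n)|^2 |s_\gamma'(x_n)|^2) < \infty$ or, dually, the summability condition in (C3). Here is where the normalization $e(0)=(\sin\gamma)^{-1}$ is spent: it pins down the constant relating $|e(x_n)|$ to $|s_0(x_n)|\cdot|s_\gamma'(x_n)|^{-1}$-type quantities, so that $|e(x_n)|^{-2}$ is, up to a fixed constant, $|h_0(x_n) h_\gamma'(x_n)|^{-1}$. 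Since a zero-free real entire $g$ of the correct type is determined up to a constant once the zero set $\{x_n\}$ is fixed (it must be $g=c\,h_\gamma^{-1}\cdot(\text{something analytic and zero-free})$, but in the zero-free case $g$ itself, normalized, just reproduces the kernel diagonal), membership reduces to the convergence of $\sum_n |h_0(x_n) h_\gamma'(x_n)|^{-1}$, which is (C3).

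Finally I would argue the converse: assuming (C1)--(C3), the function $h_\gamma(z)^{-1}$ is not entire, so instead one takes the function whose value at $x_n$ equals $1/(h_0(x_n)h_\gamma'(x_n))$-weights reassembled through the reproducing kernel---concretely, one checks that the explicit zero-free real entire function of exponential type with the prescribed behaviour (obtained as a ratio $h_0(z)/h_\gamma'$-type normalization is not literally this; rather one builds $f(z)=\sum_n \frac{s_\gamma(z)}{(z-x_n)s_\gamma'(x_n)} c_n$ with $c_n$ chosen so that $f$ is zero-free, e.g.\ $f=$ the de Branges phase function's reciprocal) lies in $\cB(e)$ exactly when the series converges, and that its being zero-free and real is automatic from (C1)--(C2) governing the distribution of $\{x_n\}$. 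I expect the main obstacle to be the careful bookkeeping of the exponential-type (mean-type) factors: showing that the $e^{az}$ ambiguities in identifying $s_0,s_\gamma$ with the canonical products $h_0,h_\gamma$ do not affect the summability condition, and that the comparison $|e(x)| \asymp (h_0(x)^2+h_\gamma(x)^2)^{1/2}$ holds with constants uniform enough to transfer an $H^2$ condition faithfully to a discrete summability condition at the sampling points $\{x_n\}$. This is exactly the place where one must invoke Woracek's theorem (the cited result in \cite{woracek2}) for the special pair of extensions and then bootstrap it, using the resolvent/Cayley relations of Section 2 together with the interlacing from Proposition~\ref{prop:properties-of-new-class}, to the general $\gamma\in(0,\pi)$ and general companion extension $S_0$.
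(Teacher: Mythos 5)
Your sketch ultimately lands on the same proof the paper gives --- Proposition~\ref{prop:1-in-dB-boosted} is proved there in one line, by combining Theorem~3.2 of \cite{woracek2} with Lemmas~3.3 and 3.4 of \cite{II} --- but the self-contained argument you build around that citation has genuine gaps. The most serious one is the sufficiency direction: producing a zero-free element of $\cB(e)$ from (C1)--(C3) is precisely the content of Woracek's existence theorem, and your construction (``build $f(z)=\sum_n \frac{s_\gamma(z)}{(z-x_n)s_\gamma'(x_n)}c_n$ with $c_n$ chosen so that $f$ is zero-free'') offers no mechanism for making such a choice; that is the hard part, not a checkable afterthought. Several intermediate claims are also wrong or incoherent: a zero-free real entire $g$ is not ``determined up to a constant once the zero set $\{x_n\}$ is fixed'' (the $x_n$ are zeros of $s_\gamma$, not of $g$), and $g=c\,h_\gamma^{-1}\cdot(\cdots)$ cannot hold for an entire $g$ since $h_\gamma^{-1}$ has poles. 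Likewise, membership of $g$ in $\cB(e)$ is not equivalent to a discrete summability condition at the points $x_n$ alone; one also needs growth control (boundedness of type/mean type relative to $e$), and this is exactly where (C1) and (C2) do real work --- they are not merely hypotheses guaranteeing convergence of the canonical products in (C3). Finally, on the real axis $\abs{e(x)}^2$ is a combination of $s_0(x)^2$, $s_\gamma(x)^2$ \emph{and} a cross term $s_0(x)s_\gamma(x)$ unless $\gamma=\pi/2$, so the comparison $\abs{e(x)}\asymp(h_0(x)^2+h_\gamma(x)^2)^{1/2}$ you rely on needs justification rather than assertion.

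The remaining step you defer --- transferring Woracek's criterion, stated for the distinguished pair attached to $e=A-iB$, to a general pair $s_0,s_\gamma$ with $\gamma\in(0,\pi)$ and the normalization $e(0)=(\sin\gamma)^{-1}$, and showing the exponential factors relating $s_0,s_\gamma$ to the canonical products $h_0,h_\gamma$ do not disturb (C1)--(C3) --- is exactly what Lemmas~3.3 and 3.4 of \cite{II} establish, and it is the only content of the proposition beyond \cite{woracek2} itself. Gesturing at ``resolvent/Cayley relations and interlacing'' does not substitute for that computation. So as written the proposal is not an independent proof: once its flawed direct arguments are removed, what is left is the paper's citation, with the transfer lemmas still unproved.
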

\begin{proof}
Combine Theorem~3.2 of \cite{woracek2} with Lemmas~3.3 and 3.4 of \cite{II}.
\end{proof}

\section{A functional model for operators in $\ournewclass$}
The functional model given in this section follows the construction
developed in \cite{II}, now adapted to include all the operators in the
class $\ournewclass$. This functional model is based on (the properties of)
the operator mentioned in (iii) of
Proposition~\ref{prop:misc-about-symm-operators} with the following addition.

\begin{proposition}
Given $A\in\ournewclass$, let $J$ be an involution that commutes with
one of its selfadjoint extensions within $\cH$ (hence with all of them),
say, $A_\gamma$. Choose $v\in\Sp(A_\gamma)$. Then, there exists
$\psi_v\in\Ker(A^*-vI)$ such that $J\psi_v=\psi_v$.
\end{proposition}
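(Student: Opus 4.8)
The plan is to use the one-dimensional kernel $\Ker(A^*-vI)$ together with the fact that $J$ leaves this kernel invariant, and then adjust the phase of a generator so as to land in the fixed-point set of $J$. First I would show that $J$ maps $\Ker(A^*-vI)$ into itself. Since $v\in\Sp(A_\gamma)$ and $A_\gamma\in\ournewclass$, part (i) of Proposition~\ref{prop:properties-of-new-class} tells us $v$ is an isolated eigenvalue of $A_\gamma$ of multiplicity one; let $g_v$ span the eigenspace. One checks (using either the resolvent description of commutativity of $J$ with $A_\gamma$, or the operator form $JA_\gamma\varphi=A_\gamma J\varphi$ on $\dom(A_\gamma)$) that $A_\gamma J g_v=\cc v\,Jg_v=v\,Jg_v$, so $Jg_v$ is again an eigenvector for $v$; by multiplicity one, $Jg_v=\lambda g_v$ for some $\lambda\in\C$, and $J^2=I$ forces $|\lambda|=1$.

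Next I would identify $\Ker(A^*-vI)$ with $\Span\{g_v\}$. This needs a small argument because $v$ is real while the standard deficiency spaces live at non-real points; but $v$ is a point of regular type (regularity of $A$), so $\ran(A-vI)$ is a closed proper subspace of codimension one, and $\Ker(A^*-vI)=\cH\ominus\ran(A-\cc vI)=\cH\ominus\ran(A-vI)$ is one-dimensional. Since $g_v\in\dom(A_\gamma)\subset\dom(A^*)$ (in the linear-relation sense, $\{g_v,vg_v\}\in A^*$) and $A_\gamma g_v=vg_v$, we have $g_v\in\Ker(A^*-vI)$, hence $\Ker(A^*-vI)=\Span\{g_v\}$. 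Therefore $J$ restricts to $\Span\{g_v\}$ and acts there as multiplication by the unimodular scalar $\lambda$.

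Finally I would fix the phase: write $\lambda=e^{i\alpha}$ and set $\psi_v:=e^{-i\alpha/2}g_v\in\Ker(A^*-vI)$. Using the conjugate-linearity of $J$ (which is built into its definition on $\cD$, $J(\sum c_n\psi(z_n))=\sum\cc{c_n}\psi(\cc{z_n})$, and persists on all of $\cH$), we get
\[
J\psi_v=\cc{e^{-i\alpha/2}}\,Jg_v=e^{i\alpha/2}\,e^{i\alpha}\,g_v\cdot e^{-i\alpha}\big|_{\text{rewrite}} .
\]
More cleanly: $J\psi_v=e^{i\alpha/2}Jg_v=e^{i\alpha/2}e^{i\alpha}g_v=e^{i\alpha/2}e^{i\alpha}e^{i\alpha/2}\psi_v=e^{2i\alpha}\psi_v$ — which shows I must be careful; the correct normalization is forced by solving $\cc c\,\lambda=c$, i.e. $c^2=\lambda$, so $c=e^{i\alpha/2}$ and then $J(c g_v)=\cc c\lambda g_v=\cc c\lambda\,c^{-1}(cg_v)=(cg_v)$ since $\cc c\lambda c^{-1}=e^{-i\alpha/2}e^{i\alpha}e^{-i\alpha/2}=1$. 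Hence $\psi_v:=e^{i\alpha/2}g_v$ satisfies $J\psi_v=\psi_v$, as required.

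The only place demanding genuine care is the very first step — deducing $J\big(\Ker(A^*-vI)\big)\subset\Ker(A^*-vI)$ at the real point $v$ — because $v$ is in the spectrum of $A_\gamma$ rather than in the resolvent set, so one cannot invoke the resolvent identity $J(A_\gamma-wI)^{-1}=(A_\gamma-\cc wI)^{-1}J$ directly at $w=v$. I expect this to be the main obstacle, and the fix is to pass to the eigenvector description: either apply the resolvent-commutation identity at a nearby non-real $w$ and take the residue at $v$ (legitimate since $v$ is an isolated pole of the resolvent by part (i) of Proposition~\ref{prop:properties-of-new-class}), or, when $A_\gamma$ is an operator, use the plain operator commutativity $JA_\gamma=A_\gamma J$ on $\dom(A_\gamma)$ together with $g_v\in\dom(A_\gamma)$; the non-operator selfadjoint extension occurs for only one value of the parameter, and one may simply choose $A_\gamma$ (equivalently $v$) to avoid it, or handle it via the relation-theoretic resolvent. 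Everything after that is elementary linear algebra in the one-dimensional space $\Span\{g_v\}$.
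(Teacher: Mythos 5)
Your argument is correct and follows essentially the same route as the paper's proof: take an eigenvector $\phi_v$ of $A_\gamma$ at $v$ (it exists by Proposition~\ref{prop:properties-of-new-class}(i)), use the commutativity of $J$ with $A_\gamma$ together with the fact that the one-dimensional space $\Ker(A^*-vI)$ contains $\Ker(A_\gamma-vI)$ to conclude that $J$ acts there antilinearly as multiplication by a unimodular scalar, and then adjust the phase. The only difference is the final normalization: you solve $\cc{c}\lambda=c$, i.e.\ $c^2=\lambda$, which works for every unimodular $\lambda$, whereas the paper sets $\psi_v=(1+\alpha)\phi_v$, a choice that degenerates when $\alpha=-1$ (there one must instead take, say, $i\phi_v$), so your phase fix quietly repairs that minor omission.
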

\begin{proof}
  Let $\phi_v$ be an element of $\Ker(A_\gamma-vI)$. Since $J$
  commutes with $A_\gamma$, one immediately obtains that
  $J\phi_v\in\Ker(A_\gamma-vI)$. But, by our assumption on the
  deficiency indices of $A$ and its regularity, $\Ker(A^*-vI)$ is a
  one dimensional space and it contains $\Ker(A_\gamma-vI)$. So, in
  $\Ker(A_\gamma-vI)$, $J$ reduces to multiplication by a scalar
  $\alpha$ and the properties of the involution imply that
  $\abs{\alpha}=1$. Now, $\psi_v:= (1+\alpha)\phi_v$ has the required
  properties.
\end{proof}

Given $A\in\ournewclass$ and an involution $J$ that commutes
with its selfadjoint extensions within $\cH$, define
\begin{equation}
\label{eq:xi-def}
\xi_{\gamma,v}(z)
	:=h_\gamma(z)\left[I+(z-v)(A_\gamma-zI)^{-1}\right]\psi_v\,,
\end{equation}
where $v$ and $\psi_v$ are chosen as in the previous proposition, and
$h_\gamma(z)$ is a real entire function whose zero set is
$\Sp(A_\gamma)$ (see Proposition \ref{prop:properties-of-new-class}
(i)). Clearly, up to a zero-free real entire function,
$\xi_{\gamma,v}(z)$ is completely determined by the choice of the
selfadjoint extension $A_\gamma$ and $v$. Actually, as it is stated
more precisely below, $\xi_{\gamma,v}(z)$ does not depend on
$A_\gamma$ nor on $v$.
\begin{proposition}
\label{prop:xi-properties}
\begin{enumerate}[(i)]
\item The vector-valued function $\xi_{\gamma,v}(z)$ is zero-free and
	entire. It lies in $\Ker(A^* - zI)$ for every $z\in\C$.
\item $J\xi_{\gamma,v}(z)=\xi_{\gamma,v}(\cc{z})$ for all $z\in\C$.
\item Given $\xi_{\gamma_1,v_1}(z)$ and
  $\xi_{\gamma_2,v_2}(z)$, there exists a zero-free real entire
  function $g(z)$ such that $\xi_{\gamma_2,v_2}(z)=
  g(z)\xi_{\gamma_1,v_1}(z)$.
\end{enumerate}
\end{proposition}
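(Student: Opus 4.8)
The plan is to prove the three assertions in the stated order, each time exploiting the generalized Cayley transform identity \eqref{eq:identity-between-kernels} to reduce statements about $\xi_{\gamma,v}(z)$ to statements about the scalar resolvent.

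\textbf{Proof of (i).} First I would show that $z\mapsto \left[I+(z-v)(A_\gamma-zI)^{-1}\right]\psi_v$ is analytic on $\C\setminus\Sp(A_\gamma)$, which is immediate from analyticity of the resolvent, and that it lies in $\Ker(A^*-zI)$ for every such $z$ by part (iii) of Proposition~\ref{prop:misc-about-symm-operators}. The point where one must be careful is the behavior near each eigenvalue $x\in\Sp(A_\gamma)$: there the resolvent $(A_\gamma-zI)^{-1}$ has a simple pole, so $\left[I+(z-v)(A_\gamma-zI)^{-1}\right]\psi_v$ has at worst a simple pole at $x$. Multiplying by $h_\gamma(z)$, which has a simple zero at each $x\in\Sp(A_\gamma)$, cancels these poles and produces an entire vector-valued function. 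The regularity of $A$ guarantees $\Sp(A_\gamma)$ has no finite accumulation point, so $h_\gamma$ is a genuine entire function and the cancellation is valid at every point of the spectrum. For zero-freeness: if $\xi_{\gamma,v}(z_0)=0$ for some $z_0\notin\Sp(A_\gamma)$, then since $h_\gamma(z_0)\neq 0$ we would get $\left[I+(z_0-v)(A_\gamma-z_0I)^{-1}\right]\psi_v=0$, contradicting the injectivity in Proposition~\ref{prop:misc-about-symm-operators}(iii) (note $\psi_v\neq 0$); if $z_0=x\in\Sp(A_\gamma)$ one instead checks that the residue of $\left[I+(z-v)(A_\gamma-zI)^{-1}\right]\psi_v$ at $x$ is (a nonzero multiple of) the eigenvector of $A_\gamma$ at $x$, which is nonzero, so multiplying by $h_\gamma$ yields a nonvanishing value at $x$. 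Finally $\xi_{\gamma,v}(z)\in\Ker(A^*-zI)$ for $z\in\Sp(A_\gamma)$ as well, since it is obtained as a limit of elements of the closed set (the graph of $A^*-zI$ is closed, and $\Ker(A^*-zI)$ is one-dimensional by regularity, so the limiting value is either zero or spans the kernel — and we have just seen it is nonzero).

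\textbf{Proof of (ii).} Using the resolvent identity in the form $(A_\gamma-zI)^{-1}\psi_v = (A_\gamma-\cc z I)^{-1}\psi_v$ conjugated through $J$ — more precisely, $J(A_\gamma-zI)^{-1}\varphi = (A_\gamma-\cc z I)^{-1}J\varphi$ by the commutation of $J$ with $A_\gamma$ — together with $J\psi_v=\psi_v$ and the fact that $h_\gamma$ is a \emph{real} entire function (so $h_\gamma(\cc z)=\cc{h_\gamma(z)}$ and $J$ acts compatibly since $J$ is antilinear), one computes
\[
J\xi_{\gamma,v}(z) = \cc{h_\gamma(z)}\,J\left[I+(z-v)(A_\gamma-zI)^{-1}\right]\psi_v
= h_\gamma(\cc z)\left[I+(\cc z-\cc v)(A_\gamma-\cc z I)^{-1}\right]\psi_v.
\]
If $v$ is chosen real (which is possible since $v\in\Sp(A_\gamma)\subset\R$), then $\cc v=v$ and the right-hand side is exactly $\xi_{\gamma,v}(\cc z)$. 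The identity, proven for $z\notin\Sp(A_\gamma)$, extends to all of $\C$ by continuity (both sides being entire by part (i)).

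\textbf{Proof of (iii).} This is the step I expect to carry the real content. The strategy: both $\xi_{\gamma_1,v_1}(z)$ and $\xi_{\gamma_2,v_2}(z)$ lie in the one-dimensional space $\Ker(A^*-zI)$ for each fixed $z$, hence for each $z\in\C$ there is a scalar $g(z)$ with $\xi_{\gamma_2,v_2}(z)=g(z)\xi_{\gamma_1,v_1}(z)$; this $g(z)$ is well defined and unique because $\xi_{\gamma_1,v_1}(z)\neq 0$ by part (i). It remains to show $g$ is entire, real, and zero-free. For analyticity, fix $z_0$, pick a vector $u\in\cH$ with $\inner{u}{\xi_{\gamma_1,v_1}(z_0)}\neq 0$; then near $z_0$ one has $g(z)=\inner{u}{\xi_{\gamma_2,v_2}(z)}/\inner{u}{\xi_{\gamma_1,v_1}(z)}$, a ratio of entire scalar functions with nonvanishing denominator near $z_0$, hence analytic there — so $g$ is analytic on all of $\C$, i.e.\ entire. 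Zero-freeness of $g$ is symmetric: $g(z)=0$ would force $\xi_{\gamma_2,v_2}(z)=0$, contradicting part (i); and running the argument with the roles reversed shows $1/g$ is also entire, so $g$ never vanishes. Reality follows from part (ii): applying $J$ to $\xi_{\gamma_2,v_2}(z)=g(z)\xi_{\gamma_1,v_1}(z)$ and using that $J$ is antilinear with $J\xi_{\gamma_i,v_i}(z)=\xi_{\gamma_i,v_i}(\cc z)$ gives $\xi_{\gamma_2,v_2}(\cc z)=\cc{g(z)}\,\xi_{\gamma_1,v_1}(\cc z)$, whence $\cc{g(z)}=g(\cc z)$, i.e.\ $g$ is real entire. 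The main obstacle here is purely bookkeeping — making sure the vector $u$ can be chosen \emph{locally uniformly} so that analyticity is genuinely established on a neighborhood, not just pointwise; this is handled by noting $\inner{u}{\xi_{\gamma_1,v_1}(\cdot)}$ is continuous and nonzero at $z_0$, hence nonzero on a neighborhood.
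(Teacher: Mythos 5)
Your parts (ii) and (iii) are essentially correct and follow the same route as the paper: commute the antilinear involution through the resolvent using $J(A_\gamma-zI)^{-1}=(A_\gamma-\cc{z}I)^{-1}J$, $J\psi_v=\psi_v$ and the reality of $h_\gamma$; then use one-dimensionality of $\Ker(A^*-zI)$ for every $z$, local scalar quotients for analyticity of $g$, symmetry of the roles for zero-freeness, and (ii) for reality. The genuine gap is in (i), at precisely the point the paper delegates to the first part of the proof of Lemma~4.1 of \cite{II}: the nonvanishing of $\xi_{\gamma,v}$ at the points of $\Sp(A_\gamma)$. You assert that the residue of $\left[I+(z-v)(A_\gamma-zI)^{-1}\right]\psi_v$ at $x\in\Sp(A_\gamma)$ is a nonzero multiple of the eigenvector at $x$. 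That residue equals $-(x-v)P_x\psi_v$, with $P_x$ the spectral projection onto $\Ker(A_\gamma-xI)=\Ker(A^*-xI)$, so its nonvanishing is equivalent to $\psi_v\not\perp\Ker(A^*-xI)$, i.e.\ to $\psi_v\notin\ran(A-xI)$; this is the substantive content of (i) and you give no argument for it. Worse, under the reading you adopt (and which \eqref{eq:xi-def} literally suggests), namely $v\in\Sp(A_\gamma)$ with $\psi_v\in\Ker(A^*-vI)=\Ker(A_\gamma-vI)$, the claim fails outright: then $(A_\gamma-zI)^{-1}\psi_v=(v-z)^{-1}\psi_v$, so $\left[I+(z-v)(A_\gamma-zI)^{-1}\right]\psi_v\equiv 0$ off $\Sp(A_\gamma)$, there are no poles to cancel, and $P_x\psi_v=0$ for every $x\in\Sp(A_\gamma)\setminus\{v\}$ by orthogonality of eigenvectors of $A_\gamma$; the same computation shows that your appeal to the injectivity statement of Proposition~\ref{prop:misc-about-symm-operators}(iii) is not available when the reference point lies in $\Sp(A_\gamma)$.

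To make (i) nondegenerate one must take the resolvent of a selfadjoint extension that does \emph{not} have the reference point as an eigenvalue (equivalently, a non-real reference point $w_0$, as in \cite{II}), and then actually prove the nonorthogonality. For non-real $w_0$ this follows from the symmetry of $A$: if $\psi_{w_0}=(A-xI)\phi$ with $\phi\in\dom(A)$ and $x\in\R$, then $0=\inner{\psi_{w_0}}{(A-\cc{w_0}I)\phi}=\norm{\psi_{w_0}}^2+(x-\cc{w_0})\inner{(A-xI)\phi}{\phi}$, which is impossible since $\inner{(A-xI)\phi}{\phi}$ is real while $-\norm{\psi_{w_0}}^2/(x-\cc{w_0})$ is not. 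With a real reference point this argument collapses and a substitute is needed (for instance, non-orthogonality of eigenvectors of distinct selfadjoint extensions, which itself requires proof via Krein's resolvent formula or the functional model). A smaller omission: you use that $h_\gamma$ has only simple zeros; that is the intended choice (the spectrum of $A_\gamma$ is simple), but it must be said, since both the cancellation of poles and zero-freeness at $\Sp(A_\gamma)$ fail if $h_\gamma$ is allowed zeros of higher order.
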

\begin{proof}
  Due to  (iii) of
  Proposition~\ref{prop:misc-about-symm-operators}, the proof of (i)
  is rather straightforward. In fact, one should only follow the
  first part of the proof of \cite[Lemma 4.1]{II}. The proof of (ii)
  also follows easily from our choice of $\psi_w$ and $h_\gamma(z)$ in
  the definition of $\xi_{\gamma,w}(z)$. To prove (iii), one first uses
  (iii) of
  Proposition~\ref{prop:misc-about-symm-operators} and the fact that
  $\dim\Ker(A^*-wI)=1$ to obtain that $\xi_{\gamma_2,w_2}(z)$ and
  $\xi_{\gamma_1,w_1}(z)$ differ by a nonzero scalar complex
  function. Then the reality of this function follows from (ii).
\end{proof}

For the reason already explained, from now on the
function $\xi_{\gamma,v}(z)$ will be denoted by $\xi(z)$. Now define
\begin{equation*}
\left(\Phi\varphi\right)(z):=\inner{\xi(\cc{z})}{\varphi},\qquad
	\varphi\in\cH.
\end{equation*}
$\Phi$ maps $\cH$ onto a certain linear manifold $\widehat{\cH}$ of
entire functions. Since $A$ is simple, it follows that $\Phi$ is
injective.  A generic element of $\widehat{\cH}$ will be denoted
by $\widehat{\varphi}(z)$, as a reminder of the fact that it is the
image under $\Phi$ of a unique element $\varphi\in\cH$.

The linear space $\widehat{\cH}$ is turned into a Hilbert space by defining
\begin{equation*}
  \inner{\widehat{\eta}(\cdot)}{\widehat{\varphi}(\cdot)}:=
\inner{\eta}{\varphi}\,.
\end{equation*}
Clearly, $\Phi$ is an isometry from $\cH$ onto $\widehat{\cH}$.
\begin{proposition}
$\hat{\cH}$ is a de Branges space.
\end{proposition}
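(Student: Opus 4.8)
The plan is to verify directly the three defining properties (A1)--(A3) for $\widehat\cH$. Recall that $\widehat\cH$ is already known to be a non-trivial Hilbert space of entire functions, and that it is complete because $\Phi$ is an isometry of $\cH$ onto it; so only (A1)--(A3) remain. Property (A1) is immediate, since for $w\in\C\setminus\R$
\[
\abs{\widehat\varphi(w)}=\abs{\inner{\xi(\cc w)}{\varphi}}\le\norm{\xi(\cc w)}\,\norm{\varphi}=\norm{\xi(\cc w)}\,\norm{\widehat\varphi}\,;
\]
thus evaluation at $w$ is continuous, with reproducing element $k(\cdot,w)=\Phi\big(\xi(\cc w)\big)$, that is, $k(z,w)=\inner{\xi(\cc z)}{\xi(\cc w)}$.

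Property (A3) follows from the involution $J$. From its construction, $J$ is an antilinear involution of $\cH$ satisfying $\inner{J\eta}{J\varphi}=\inner{\varphi}{\eta}$ (this is \eqref{eq:pre-j}, extended from $\cD$ to all of $\cH$ by simplicity of $A$), so in particular $\norm{J\varphi}=\norm{\varphi}$. Together with $J\xi(z)=\xi(\cc z)$ from Proposition~\ref{prop:xi-properties}(ii) this gives
\[
\widehat\varphi^\#(z)=\cc{\widehat\varphi(\cc z)}=\cc{\inner{\xi(z)}{\varphi}}
=\inner{J\xi(z)}{J\varphi}=\inner{\xi(\cc z)}{J\varphi}=\widehat{J\varphi}(z),
\]
so $\widehat\varphi^\#=\widehat{J\varphi}\in\widehat\cH$ with $\norm{\widehat\varphi^\#}=\norm{J\varphi}=\norm{\varphi}=\norm{\widehat\varphi}$.

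The substantive point is (A2), and the idea is to realize the Blaschke-type modification of a function in $\widehat\cH$ as the image under $\Phi$ of a unitary operator on $\cH$. Fix $w\in\C^+$ and put $V:=I+(w-\cc w)(A_\gamma-wI)^{-1}$, the generalized Cayley transform, which is unitary on $\cH$ (as already used in Section~2); hence $V^*=I+(\cc w-w)(A_\gamma-\cc wI)^{-1}$. From \eqref{eq:mapping-in-kernel}--\eqref{eq:identity-between-kernels}, the definition \eqref{eq:xi-def} of $\xi$, and the resolvent identity one first obtains, for $\zeta\notin\Sp(A_\gamma)$ and arbitrary $\eta$,
\[
(A_\gamma-\zeta I)^{-1}\xi(\eta)=\frac{1}{\zeta-\eta}\left(\frac{h_\gamma(\eta)}{h_\gamma(\zeta)}\,\xi(\zeta)-\xi(\eta)\right)
\]
(note $h_\gamma(\zeta)\neq0$ off $\Sp(A_\gamma)$, $h_\gamma$ being real entire with zero set $\Sp(A_\gamma)$). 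Applying $V^*$ to $\xi(\cc z)$ through this identity (take $\zeta=\cc w$ and $\eta=\cc z$) and simplifying yields $V^*\xi(\cc z)=\frac{w-\cc z}{\cc w-\cc z}\,\xi(\cc z)+c(z)\,\xi(\cc w)$ with $c(z)$ a scalar multiple of $h_\gamma(z)$; pairing with $\varphi$, conjugating the scalars, and using $\cc{h_\gamma(\cc z)}=h_\gamma(z)$, one gets
\[
\widehat{V\varphi}(z)=\frac{z-\cc w}{z-w}\,\widehat\varphi(z)+\frac{(w-\cc w)\,h_\gamma(z)}{(w-z)\,h_\gamma(w)}\,\widehat\varphi(w).
\]
Therefore, if $\widehat\varphi(w)=0$ then $\widehat{V\varphi}(z)=\frac{z-\cc w}{z-w}\,\widehat\varphi(z)$, so this function lies in $\widehat\cH$, and $\norm{\widehat{V\varphi}}=\norm{V\varphi}=\norm{\varphi}=\norm{\widehat\varphi}$ because $V$ is unitary and $\Phi$ is an isometry. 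This establishes (A2) for zeros $w\in\C^+$; for a zero $w\in\C^-$ one applies the case just proved to $\widehat\varphi^\#\in\widehat\cH$ (which vanishes at $\cc w\in\C^+$) and invokes (A3) once more. With (A1)--(A3) verified, $\widehat\cH$ is a de Branges space.

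I expect the bookkeeping in the last step to be the only real obstacle: deriving the resolvent formula for $(A_\gamma-\zeta I)^{-1}\xi(\eta)$ and the resulting identity for $\widehat{V\varphi}$ with every complex conjugate correctly placed (the $h_\gamma$-terms appear precisely because $h_\gamma(\zeta)$ is not real for non-real $\zeta$), and checking that $V$ is still unitary in the exceptional case where $A_\gamma$ is the selfadjoint extension that is a proper linear relation rather than an operator. Both points are routine but easy to slip on.
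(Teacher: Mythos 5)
Your proof is correct and follows essentially the same route as the paper: (A1) via the kernel $k(z,w)=\inner{\xi(\cc z)}{\xi(\cc w)}$, (A3) via the involution $J$ and Proposition~\ref{prop:xi-properties}(ii), and (A2) via the very same vector $\eta=V\varphi=\varphi+(w-\cc w)(A_\gamma-wI)^{-1}\varphi$ together with the resolvent identity and the unitarity of the generalized Cayley transform. Your detour through $V^*\xi(\cc z)$ (keeping the extra $h_\gamma$-term until $\widehat\varphi(w)=0$ kills it) and the reduction of $w\in\C^-$ to $\C^+$ via (A3) are only cosmetic variations on the paper's computation of $\inner{\xi(\cc z)}{\eta}=\frac{z-\cc w}{z-w}\inner{\xi(\cc z)}{\varphi}$.
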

\begin{proof}
It suffices to show that the axioms given at the beginning of
Section~\ref{subsec:dB} holds for $\hat{\cH}$.

It is straightforward to verify that $k(z,w):=\inner{\xi(\cc{z})}{\xi(\cc{w})}$
is a reproducing kernel for $\hat{\cH}$. This accounts for (A1).

Suppose $\hat{\varphi}(z)\in\hat{\cH}$ has a zero at $z=w$. Then its preimage
$\varphi\in\cH$ lies in $\ran(A-wI)$. This allows one
to set $\eta\in\cH$ by
\[
\eta  = (A-\cc{w}I)(A-wI)^{-1}\varphi
	= \varphi + (w-\cc{w})(A_\gamma-wI)^{-1}\varphi.
\]
Now, recalling \eqref{eq:xi-def} and applying the resolvent identity one
obtains
\[
\inner{\xi(\cc{z})}{\eta} = \frac{z-\cc{w}}{z-w}\inner{\xi(\cc{z})}{\varphi}.
\]
Since $\eta$ and $\varphi$ are related by a Cayley transform, the equality
of norms follows. This proves (A2).

As for (A3), consider any $\hat{\varphi}(z)=\inner{\xi(\cc{z})}{\varphi}$.
Then, as a consequence of (ii) of Proposition~\ref{prop:xi-properties},
one has $\hat{\varphi}^\#(z)=\inner{\xi(\cc{z})}{J\varphi}$.
\end{proof}

It is worth remarking that the last part of the proof given above
shows that $^\#=\Phi J \Phi^{-1}$.

The following obvious assertion is the key of (every) functional
model; we state it for the sake of completeness.

\begin{proposition}
Let $S$ be the multiplication operator on
  $\widehat{\cH}$ given by (\ref{eq:multiplication-operator}).
\begin{enumerate}[{(i)}]
\item $S=\Phi A\Phi^{-1}$ and $\dom(S)=\Phi\dom(A)$.
\item The selfadjoint extensions of $S$ within $\hat{\cH}$ are in one-one
	correspondence with the selfadjoint extensions of $A$ within $\cH$.
\end{enumerate}
\end{proposition}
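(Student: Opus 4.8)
The plan is to prove (i) first and then read off (ii) from it together with the unitarity of $\Phi$.

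For (i) I would begin with the one-line verification that $\Phi A\Phi^{-1}\subseteq S$. Let $\varphi\in\dom(A)$. By Proposition~\ref{prop:xi-properties}(i) the vector $\xi(\cc z)$ lies in $\Ker(A^*-\cc zI)$, which by \eqref{eq:adjoint-def}, \eqref{eq:adjoint-shifted} and \eqref{eq:ker-adjoint} means precisely that $\{\xi(\cc z),\cc z\,\xi(\cc z)\}\in A^*$, i.e. $\inner{\xi(\cc z)}{A\varphi}=\inner{\cc z\,\xi(\cc z)}{\varphi}=z\inner{\xi(\cc z)}{\varphi}$ for every $z\in\C$, the last equality because $\inner{\cdot}{\cdot}$ is antilinear in its first argument. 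Hence $(\Phi A\varphi)(z)=z(\Phi\varphi)(z)$; since $A\varphi\in\cH$ its image $\Phi A\varphi$ lies in $\hat{\cH}$, so $z(\Phi\varphi)(z)\in\hat{\cH}$ and therefore, by \eqref{eq:multiplication-operator}, $\Phi\varphi\in\dom(S)$ with $S\Phi\varphi=\Phi A\varphi$. This yields $\Phi\dom(A)\subseteq\dom(S)$ and $\Phi A\Phi^{-1}\subseteq S$.

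To upgrade this inclusion to an equality I would invoke a deficiency-index argument. Being unitarily equivalent to $A$, the operator $\Phi A\Phi^{-1}$ is closed and symmetric with deficiency indices $(1,1)$; and since $\hat{\cH}$ is a de Branges space, its multiplication operator $S$ is closed and symmetric with deficiency indices $(1,1)$ by \cite[Proposition~4.2, Corollary~4.3, Corollary~4.7]{kaltenback}, as recalled in Section~\ref{subsec:dB}. A closed symmetric operator with deficiency indices $(1,1)$ admits no proper closed symmetric extension (within the Hilbert space) other than its selfadjoint extensions; so if $\Phi A\Phi^{-1}$ were strictly smaller than $S$, then $S$, which is closed, would be such a proper extension and hence selfadjoint, contradicting that its deficiency indices are not $(0,0)$. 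Therefore $\Phi A\Phi^{-1}=S$, and in particular $\dom(S)=\Phi\dom(A)$, which is (i). The one point that needs a little care here is that $A$, and hence $S$, need not be densely defined, so that $A^*$ is in general a linear relation; this is harmless since the deficiency subspaces $\Ker(A^*-wI)$ and the von Neumann-type description of the symmetric extensions remain available in that generality, as in the results of \cite{hassi1} used in Section~2.

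Part (ii) is then immediate and I do not expect any real obstacle there: the assertion is ``obvious'' in the stated sense. Since $\Phi$ is unitary, $B\mapsto\{\{\Phi f,\Phi g\}:\{f,g\}\in B\}$ is an inclusion-preserving bijection between linear relations in $\cH\oplus\cH$ and linear relations in $\hat{\cH}\oplus\hat{\cH}$ that intertwines the relational adjoint; hence it restricts to a bijection between the selfadjoint relations extending $A$ and the selfadjoint relations extending $\Phi A\Phi^{-1}=S$, which is exactly the claim. The only thing worth remarking, again, is that the correspondence is phrased for linear relations so as to include the unique non-operator selfadjoint extension on each side, a situation already accommodated by the framework of Section~2.
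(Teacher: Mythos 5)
Your argument is correct. Note that the paper itself offers no proof at all: the proposition is stated as an ``obvious assertion'' recorded only for completeness, so there is no official argument to compare against; yours is a legitimate way to substantiate it. The direct computation $\inner{\xi(\cc z)}{A\varphi}=z\inner{\xi(\cc z)}{\varphi}$ for $\varphi\in\dom(A)$, resting on $\xi(\cc z)\in\Ker(A^*-\cc zI)$ and the relational definition \eqref{eq:adjoint-def}, gives $\Phi A\Phi^{-1}\subseteq S$ exactly as you say, and the upgrade to equality via maximality is sound: both operators are closed and symmetric with deficiency indices $(1,1)$ (for $S$ this is the cited Kaltenb\"ack--Woracek result, applicable because $\widehat{\cH}$ was shown to be a de Branges space), and in the linear-relation framework of \cite{hassi1} a proper closed symmetric extension of a relation with indices $(1,1)$ is necessarily selfadjoint, so a strict inclusion would force $S$ to be selfadjoint, a contradiction; your remark that non-density of $\dom(A)$ is harmless here is apt. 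An equally quick alternative, avoiding extension theory, is to reverse the computation: if $\hat\varphi,\,z\hat\varphi\in\widehat{\cH}$, then the preimage of $z\hat\varphi(z)$ together with $\varphi$ forms a pair lying in $A^{**}=A$ by the same kernel identity, giving $S\subseteq\Phi A\Phi^{-1}$ directly. Part (ii) is, as you note, immediate from unitarity of $\Phi$ and the fact that conjugation by a unitary preserves adjoints and inclusions of linear relations, which also covers the exceptional non-operator extension.
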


Item (ii) above can be stated more succinctly by saying that
\[
\Phi(A_\beta - zI)^{-1}\Phi^{-1} = (S_\beta - zI)^{-1},\quad
z\in\C\setminus\Sp(A_\gamma),
\]
for all $\beta$ of a certain (common) parametrization of the selfadjoint
extensions of both $A$ and $S$. 
This expression is of course valid even for the exceptional (i.e. 
non-operator) selfadjoint extension of $A$. In passing we note that
the exceptional selfadjoint extension of a non-densely defined operator 
in $\ournewclass$ corresponds to the selfadjoint extension of the 
operator $S$ whose associated function lies in $\widehat{\cH}$.

\section{Spectral characterization}
In the previous section we constructed a functional model that
associates a de Branges space to every operator $A$ in $\ournewclass$
in such a way that the operator of multiplication in the de Branges
space is unitarily equivalent to $A$. The first task in this section
is to single out the class of de Branges spaces corresponding to
entire operators in our functional model. Having found this class, we
use the theory of de Branges spaces to give a spectral
characterization of the multiplication operator for the class we
found. This is how we give necessary and sufficient conditions on the
spectra of two selfadjoint extensions of an entire operator.

The following proposition gives a characterization of the class of
de Branges spaces corresponding to entire operators in our functional
model.
\begin{proposition}
  \label{prop:when-the-operator-is-entire}
  $A\in\ournewclass$ is entire if and only if $\widehat{\cH}$ contains
  a zero-free entire function.
\end{proposition}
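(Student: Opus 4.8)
The plan is to translate the entire-operator condition through the functional model $\Phi$ and recognize the image of an entire gauge as a zero-free entire function, exploiting the decomposition $\cH = \ran(A-zI) \dot{+} \Span\{\mu\}$ together with the fact (established in the proof that $\hat{\cH}$ is a de Branges space) that $\hat{\varphi}(w) = 0$ precisely when $\varphi \in \ran(A-wI)$.

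First I would prove the forward implication. Assume $A$ is entire with entire gauge $\mu$, and set $\hat{\mu}(z) := (\Phi\mu)(z) = \inner{\xi(\cc{z})}{\mu}$. For any fixed $z_0 \in \C$, the decomposition $\cH = \ran(A-z_0 I) \dot{+} \Span\{\mu\}$ shows that $\mu \notin \ran(A-z_0 I)$; by the characterization of zeros of elements of $\hat{\cH}$ recalled above (the preimage of a function vanishing at $w$ lies in $\ran(A-wI)$, and conversely), this forces $\hat{\mu}(z_0) \ne 0$. Since $z_0$ was arbitrary, $\hat{\mu}(z)$ is a zero-free entire function lying in $\hat{\cH}$, which is exactly what is claimed. (One should note $\hat{\mu}(z)$ is entire automatically, being in $\hat{\cH}$, a space of entire functions.)

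For the converse, suppose $\hat{\cH}$ contains a zero-free entire function $\hat{\mu}(z)$, with preimage $\mu = \Phi^{-1}\hat{\mu} \in \cH$. Fix $z \in \C$. Since $\hat{\mu}(z) \ne 0$, the preimage $\mu$ does not belong to $\ran(A-zI)$, so $\ran(A-zI) \cap \Span\{\mu\} = \{0\}$ (as $\ran(A-zI)$ is a subspace and $\mu \notin$ it, no nonzero scalar multiple of $\mu$ is in it either). It remains to check that $\ran(A-zI) + \Span\{\mu\}$ is all of $\cH$, i.e. that $\ran(A-zI)$ has codimension exactly one. For real $z$ this is part of Proposition~\ref{prop:properties-of-new-class}(ii); for non-real $z$ it follows from $\Ker(A^*-\cc{z}I) = \cH \ominus \ran(A-zI)$ being one-dimensional (by the deficiency-index assumption and regularity). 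Hence $\cH = \ran(A-zI) \dot{+} \Span\{\mu\}$ for every $z \in \C$, so $\mu$ is an entire gauge and $A$ is entire.

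The only mildly delicate point — and the step I would be most careful about — is the codimension bookkeeping in the converse, particularly the case $z \in \R$: when $z$ lies in the spectrum of the exceptional (non-operator) selfadjoint extension, one must be sure that $\ran(A - zI)$ still has codimension one in $\cH$ rather than two. This is handled by Proposition~\ref{prop:properties-of-new-class}(i)--(ii): every real number is an eigenvalue of multiplicity one of exactly one selfadjoint extension, and the generalized Aronszajn--Krein/Weyl-function analysis cited there guarantees the codimension stays equal to one throughout $\C$. With that in hand the argument is purely a matter of transporting the direct-sum decomposition back and forth across the isometry $\Phi$, and no genuine computation is required.
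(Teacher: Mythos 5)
Your proof is correct and takes essentially the same route as the paper's: under the isometry $\Phi$, a zero-free function in $\widehat{\cH}$ corresponds exactly to a vector $\mu$ that is never orthogonal to $\Ker(A^*-\cc{z}I)=\cH\ominus\ran(A-zI)$, i.e.\ to an entire gauge, and conversely the image of an entire gauge is zero-free. The only small correction concerns your sourcing of the codimension-one fact at real points: it is not Proposition~\ref{prop:properties-of-new-class}(ii) but rather the regularity of $A$ together with the deficiency indices $(1,1)$ that gives $\dim\Ker(A^*-zI)=1$ for \emph{every} $z\in\C$ (as the paper itself notes, and as is also witnessed by the zero-free vector function $\xi(z)\in\Ker(A^*-zI)$), which is exactly what makes $\ran(A-zI)\dot{+}\Span\{\mu\}=\cH$ once $\mu\notin\ran(A-zI)$.
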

\begin{proof}
  Let $g(z)\in\widehat{\cH}$ be the function whose
existence is assumed. Clearly there exists (a unique) $\mu\in\cH$ such
that $g(z)\equiv\inner{\xi(\cc{z})}{\mu}$. Therefore, $\mu$ is never
orthogonal to $\Ker(A^*-zI)$ for all $z\in\mathbb{C}$.
That is, $\mu$ is an entire gauge for the operator $A$.

The necessity is established by noting that the image of the entire
gauge under $\Phi$ is a zero-free function.
\end{proof}
\begin{proposition}
\label{prop:spectrum-tells-if-operator-is-entire}
For $A\in\ournewclass$, consider the selfadjoint extensions (within
$\cH$) $A_0$ and $A_{\gamma}$, with $0<\gamma<\pi$. Then $A$ is entire
with real entire gauge $\mu$ ($J\mu=\mu$) if and only if $\Sp(A_0)$
and $\Sp(A_{\gamma})$ obey conditions (C1), (C2) and (C3) of
Proposition~\ref{prop:1-in-dB-boosted}.
\end{proposition}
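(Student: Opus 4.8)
The plan is to carry the question into the de Branges functional model of Section~4 and there invoke Proposition~\ref{prop:1-in-dB-boosted}. The first step is a ``real gauge'' refinement of Proposition~\ref{prop:when-the-operator-is-entire}, namely that $A$ is entire with a real entire gauge (that is, with an entire gauge $\mu$ satisfying $J\mu=\mu$) if and only if $\widehat{\cH}$ contains a real zero-free entire function. For the necessity: if $\mu$ is an entire gauge then $\widehat{\mu}(z)=\inner{\xi(\cc{z})}{\mu}$ is zero-free, since, as in the proof of Proposition~\ref{prop:when-the-operator-is-entire}, $\mu$ is orthogonal to no $\Ker(A^{*}-zI)$, $z\in\C$; and since $f^{\#}=\Phi J\Phi^{-1}f$ for $f\in\widehat{\cH}$, one has $\widehat{\mu}^{\#}=\Phi(J\mu)$, so $J\mu=\mu$ gives $\widehat{\mu}^{\#}=\widehat{\mu}$. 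For the sufficiency: a real zero-free $g\in\widehat{\cH}$ has preimage $\mu=\Phi^{-1}g$, which is an entire gauge and satisfies $J\mu=\Phi^{-1}g^{\#}=\Phi^{-1}g=\mu$. Hence it is enough to prove that $\widehat{\cH}$ contains a real zero-free entire function if and only if $\Sp(A_{0})$ and $\Sp(A_{\gamma})$ obey (C1)--(C3).

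To this end I would represent $\widehat{\cH}$ as a de Branges space $\cB(e)$ and normalize $e$ so that Proposition~\ref{prop:1-in-dB-boosted} becomes applicable; write $s_{\beta}(z)=\tfrac{i}{2}[e^{i\beta}e(z)-e^{-i\beta}e^{\#}(z)]$ as in Section~\ref{subsec:dB}. Three normalizations are required, each at one's disposal. First, $e$ has no real zeros: this is automatic, since $e=-s_{\pi/2}-is_{0}$, so a real zero of $e$ would be a common zero of $s_{0}$ and $s_{\pi/2}$, which is impossible because the multiplication operator $S$ on $\widehat{\cH}$ is simple and regular, so the zero sets of $s_{0}$ and $s_{\pi/2}$ interlace. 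Second, the $\beta$-parametrization of the selfadjoint extensions of $S$ must be aligned with the labelling of the extensions of $A$, so that $A_{0}$ corresponds to $S_{0}$ and $A_{\gamma}$ to $S_{\gamma}$ under $\Phi$; this is achieved by the substitution $e\mapsto e^{i\alpha}e$, which leaves both $\cB(e)$ and its norm unchanged while rotating $s_{\beta}\mapsto s_{\beta+\alpha}$, and $\alpha$ can be chosen so that, in addition, $A_{0}$ is the distinguished extension having $0$ in its spectrum, equivalently $e(0)\in\R$. Third, $e(0)=(\sin\gamma)^{-1}$: apply $e\mapsto ce$ with the positive constant $c=(e(0)\sin\gamma)^{-1}$, which alters neither $\widehat{\cH}$ as a set nor the zero sets of the $s_{\beta}$. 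With such an $e$, $\{x:s_{\gamma}(x)=0\}=\Sp(A_{\gamma})$ and $\{x:s_{0}(x)=0\}=\Sp(A_{0})$, with $0\in\Sp(A_{0})$ whereas $s_{\gamma}(0)=-1\neq0$.

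Proposition~\ref{prop:1-in-dB-boosted} now applies verbatim: $\cB(e)=\widehat{\cH}$ contains a real zero-free entire function if and only if (C1), (C2) and (C3) hold, where $\{x_{n}\}$ is the zero set of $s_{\gamma}$, i.e.\ $\Sp(A_{\gamma})$, the product $h_{\gamma}$ is formed from $\Sp(A_{\gamma})$, and the product $h_{0}$ is formed from $\Sp(A_{0})$ in the branch corresponding to $0$ being a root. These are exactly the conditions (C1)--(C3) read off from $\Sp(A_{0})$ and $\Sp(A_{\gamma})$, so together with the reduction of the first paragraph the proposition follows.

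The step I expect to be the main obstacle is the second paragraph: one must check that the rotation and rescaling freedoms in the choice of the Hermite--Biehler function can be used simultaneously to align the two extension parametrizations and to enforce $e(0)=(\sin\gamma)^{-1}$, and that these manipulations preserve everything Proposition~\ref{prop:1-in-dB-boosted} actually refers to, namely the set $\widehat{\cH}$ and the zero sets of the functions $s_{\beta}$. Everything else is a direct combination of Propositions~\ref{prop:when-the-operator-is-entire} and \ref{prop:1-in-dB-boosted} with the functional-model identities of Sections~\ref{subsec:dB} and~4.
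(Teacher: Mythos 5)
Your proposal is correct and follows essentially the same route as the paper, whose proof is simply to combine Proposition~\ref{prop:when-the-operator-is-entire} with Proposition~\ref{prop:1-in-dB-boosted} through the functional model. The details you supply (the real-gauge refinement via $^\#=\Phi J\Phi^{-1}$ and the rotation/rescaling normalization of the Hermite--Biehler function, with $e$ having no real zeros by the interlacing/disjointness of the spectra) are exactly the steps the paper leaves implicit.
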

\begin{proof}
Apply Proposition~\ref{prop:1-in-dB-boosted} along with
Proposition~\ref{prop:when-the-operator-is-entire}.
\end{proof}

We remark that when $A$ is an entire operator with non-dense domain, it
may be that either $A_0$ or $A_\gamma$ is not an operator (see
Proposition \ref{prop:misc-about-symm-operators} (ii)). Nevertheless,
even in this case, $\Sp(A_0)$ and $\Sp(A_{\gamma})$ satisfy (C1), (C2)
and (C3).

The following proposition shows, among other things, that the original
functional model by Krein is a particular case of our functional model.
\begin{proposition}
Assume $1\in\widehat{\cH}$. Then there exists $\mu\in\cH$
such that
\[
h_\gamma(z)=\inner{\psi_v + (z-v)(A_\gamma-zI)^{-1}\psi_v}{\mu}^{-1}
\]
and $J\mu=\mu$. Moreover, $\mu$ is the unique entire gauge of $A$ modulo
a real scalar factor.
\end{proposition}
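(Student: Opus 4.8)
Here is the approach I would take. Since $1\in\widehat\cH$ and $\Phi$ is a bijective isometry from $\cH$ onto $\widehat\cH$ (onto by construction, injective because $A$ is simple), I set $\mu:=\Phi^{-1}(1)$, that is, $\inner{\xi(\cc z)}{\mu}\equiv 1$. By Proposition~\ref{prop:when-the-operator-is-entire}, applied to the zero-free entire function $1$, this $\mu$ is an entire gauge of $A$. That $J\mu=\mu$ is then immediate: using the identity $^\#=\Phi J\Phi^{-1}$ recorded in Section~4 together with $1^\#=1$, one has $\Phi(J\mu)=(\Phi\mu)^\#=1=\Phi\mu$, whence $J\mu=\mu$ by injectivity of $\Phi$. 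So the existence part of the statement amounts to producing the closed expression for $h_\gamma$.

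For that, write $\widetilde\psi(z):=\psi_v+(z-v)(A_\gamma-zI)^{-1}\psi_v$, so that by \eqref{eq:xi-def} one has $\xi(z)=h_\gamma(z)\widetilde\psi(z)$. Substituting into $\inner{\xi(\cc z)}{\mu}\equiv 1$ and using that $h_\gamma$ is real entire together with the antilinearity of $\inner{\cdot}{\cdot}$ in its first argument, the scalar $h_\gamma(\cc z)=\overline{h_\gamma(z)}$ comes out of the bracket and leaves $h_\gamma(z)\inner{\widetilde\psi(\cc z)}{\mu}\equiv 1$. Rewriting $\inner{\widetilde\psi(\cc z)}{\mu}$ by means of part~(ii) of Proposition~\ref{prop:xi-properties}, in the form $J\widetilde\psi(z)=\widetilde\psi(\cc z)$ (a consequence of $J\xi(z)=\xi(\cc z)$ and the reality of $h_\gamma$), of $J\mu=\mu$, and of the fact that $J$ is an anti-unitary involution, delivers the asserted identity $h_\gamma(z)=\inner{\widetilde\psi(z)}{\mu}^{-1}$ and, at the same time, shows that $z\mapsto\inner{\widetilde\psi(z)}{\mu}$ is real meromorphic, as it must be since $h_\gamma^\#=h_\gamma$. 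All of this is routine bookkeeping with the conventions of Section~4.

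It remains to prove the uniqueness. With $J\mu=\mu$ in hand the pertinent notion is that of a real entire gauge $\nu$, i.e.\ one with $J\nu=\nu$, and I would show $\nu=c\mu$ with $c\in\R$. Its image $\Phi\nu$ is a zero-free real entire element of $\widehat\cH$, and so is $\Phi\mu=1$, so the claim reduces to: a de Branges space containing the constant function $1$ contains no zero-free real entire function other than the real constants. To see this I would realize $\widehat\cH=\cB(e)$; then $1\in\cB(e)$ means $1/e\in H^2(\C^+)$, so $e$ is of bounded type in $\C^+$ and $\log\abs{e}$ is Poisson-integrable on $\R$. A zero-free real entire $g\in\cB(e)$ is of bounded type in both half-planes (since $g/e$ and $1/e$ belong to $H^2(\C^+)$ and $g^\#=g$), hence of finite exponential type, hence of the form $g(z)=c\,e^{az}$ with $a,c\in\R$ by Hadamard factorization; and membership $g/e\in H^2(\C^+)$ is then incompatible with $a\neq 0$ because $\log\abs{e}$ is Poisson-integrable. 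Thus $\Phi\nu$ is a real constant and $\nu=c\mu$. I expect this last step to be the main obstacle: excluding non-constant zero-free functions needs the de Branges/Hermite--Biehler input above (it can also be extracted from the proof of Proposition~\ref{prop:1-in-dB-boosted}, which already encodes that the zero-free real entire member of a de Branges space is essentially unique), and one should bear in mind that it only governs the real gauges — a de Branges space containing $1$ may well contain non-real zero-free functions (as in modulated Paley--Wiener spaces), so the qualifier ``modulo a real scalar factor'' is genuinely needed.
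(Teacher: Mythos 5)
Your existence and reality argument ($\mu:=\Phi^{-1}1$, pulling $h_\gamma$ out of the bracket via \eqref{eq:xi-def}, and $J\mu=\mu$ from $^\#=\Phi J\Phi^{-1}$ with $1^\#=1$) is exactly the paper's proof, only written out. One small bookkeeping caveat: with the convention $\widehat\varphi(z)=\inner{\xi(\cc z)}{\varphi}$ and writing $\tilde\psi(z)=\psi_v+(z-v)(A_\gamma-zI)^{-1}\psi_v$, what actually comes out is $h_\gamma(z)\inner{\tilde\psi(\cc z)}{\mu}\equiv 1$ with $\inner{\tilde\psi(\cc z)}{\mu}=\cc{\inner{\tilde\psi(z)}{\mu}}$, so the displayed formula is to be read with $\cc z$ in the first slot (equivalently for real $z$, then by continuation); this imprecision is in the statement itself, not in your argument. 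Where you genuinely diverge is uniqueness: the paper disposes of it by citing Paragraph~5.2 of \cite{gorbachuk} (two real entire gauges have transition function $ae^{ibz}$, and reality forces $b=0$, $a\in\R$), whereas you reprove the needed special case directly in the de Branges space: a zero-free real entire element of a space containing $1$ is of bounded type in both half-planes, hence of finite exponential type by Krein's theorem, hence $ce^{az}$ with $a,c\in\R$, and then $a\neq 0$ must be excluded. Your route has the virtue of not leaning on Krein's theory for densely defined operators, whose transfer to the non-dense case the paper leaves implicit; but your one-line exclusion of $a\neq0$ ``because $\log\abs{e}$ is Poisson-integrable'' needs unpacking: since $1/e\in H^2(\C^+)$ and $e$ is Hermite--Biehler, $e$ is of bounded type in both half-planes, hence of Cartwright class, so its indicator vanishes along the real axis; on the other hand $\int_\R e^{2ax}\abs{e(x)}^{-2}dx<\infty$ forces $\limsup\log\abs{e(x)}/\abs{x}\ge\abs{a}>0$ along $x\to+\infty$ (or $x\to-\infty$ if $a<0$), a contradiction. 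With that step spelled out your argument is complete, and your closing observation --- that the uniqueness statement governs real gauges only, non-real zero-free elements of the form $ae^{ibz}$ being perfectly possible --- agrees with how the paper itself reads the claim.
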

\begin{proof}
  Necessarily, $1\equiv\inner{\xi(\cc{z})}{\mu}$ for some
  $\mu\in\cH$. By (\ref{eq:xi-def}), and taking into account the
  occurrence of $J$, one obtains the stated expression for
  $h_\gamma(z)$.  By the same token, the reality of $\mu$ is shown.

  Suppose that there are two real entire gauges $\mu$ and $\mu'$.
  The discussion in Paragraph~5.2 of \cite{gorbachuk} shows that
  $(\Phi_{\mu}\mu')(z)=ae^{ibz}$ with $a\in\mathbb{C}$ and
  $b\in\mathbb{R}$.  Due to the assumed reality, one concludes that
  $b=0$ and $a\in\mathbb{R}$.
\end{proof}

\section{Concluding remarks}

We would like to add some few comments concerning further extensions
of the present work.

First, since there are de Branges spaces that contain the constant
functions but whose multiplication operator is not densely defined,
it follows that, apart from the example given in Section~2, there
should be other operators in the class introduced in this work
that are not comprised in the original Krein's notion of entire
operators. The details of our example as well as other ones and
applications of our results will be studied elsewhere.

Second, it is possible to define a notion of a (possibly non-densely
defined) operator that is entire in a generalized sense, much in the
same vein as the original definition by Krein for densely defined
operators (see \cite[Chapter 2, Section 9]{gorbachuk}). Following
\cite[Section 5]{II}, operators entire in this generalized sense could
also be characterized by the spectra of their selfadjoint extensions.

Finally, it is known that the set of selfadjoint operator extensions
within $\cH$ of a non-densely defined operator are in one-one
correspondence with a set of rank-one perturbations of one of these
selfadjoint operator extensions \cite[Section 2]{hassi1}.  This set of
rank-one perturbations is generated by elements in $\cH$ so it seems
interesting to study the relation (if any) between these elements and
the gauges of operators in $\ournewclass$. Ultimately, we believe that
a suitable characterization of the rank-one perturbations could
provide another necessary and sufficient condition for a non-densely
defined operator in $\ournewclass$ to be entire. This problem, as well
as the previous one, will be discussed in a subsequent work.

\end{document}